\def\ps@pprintTitle{%
  \let\@oddhead\@empty
  \let\@evenhead\@empty
  \let\@oddfoot\@empty
  \let\@evenfoot\@oddfoot
}
\newcommand{\al}{\Sigma}
\newcommand{\pow}{\ell}
\newcommand{\N}{\ensuremath{\mathbb{N}}\xspace}
\newcommand{\F}{\ensuremath{\mathbb{F}}\xspace}
\newcommand{\set}[1]{\{#1\}}
\newcommand{\myvec}[1]{\ensuremath{\boldsymbol{#1}}\xspace}
\renewcommand{\P}{\myvec{P}}
\newcommand{\ccc}{\myvec{c}}
\newcommand{\coeff}{\alpha}
\newcommand{\pol}{\pi}
\renewcommand{\P}{\myvec{P}}
\newcommand{\Z}{\mathbb{Z}}
\newcommand{\LP}{\mathbb{L}}
\newcommand{\LS}{\mathbb{S}}
\newcommand{\pid}{\mathbb{P}}
\newcommand{\field}{\mathbb{F}}
\newcommand{\zetaki}{\Z/k_i\Z}
\newcommand{\zetapk}{\Z/p^k\Z}
\newcommand{\zetapku}{\Z/p^{k_1}\Z}
\newcommand{\zetapkn}{\Z/p^{k_n}\Z}
\newcommand{\LPKuno}{\ensuremath{\Z/p^{k_1}\Z\left[X,X^{-1}\right]}\xspace}
\newcommand{\emb}{\psi}
\newcommand{\Emb}{\Psi}
\newcommand{\locrule}{\ensuremath{f}}
\newcommand{\glorule}{\ensuremath{{\cal F}}}
\newcommand{\ie}{i.e.\@\xspace}
\newcommand{\az}{S^{\mathbb{Z}}}
\newcommand{\zd}{\mathbb{Z}^D}
\newcommand{\gzd}{{\al}^{\Z}}
\newcommand{\mmod}{\,\mathrm{mod}\, }
\newtheorem{theorem}{Theorem}
\newdefinition{remark}[theorem]{Remark}
\newdefinition{definition}[theorem]{Definition}
\newdefinition{example}[theorem]{Example}
\newtheorem{lemma}[theorem]{Lemma}
\newtheorem{corollary}[theorem]{Corollary}
\newcounter{exnum}\setcounter{exnum}{0}
\hfill\hbox{$\quad\Box$}\end{trivlist}}
\begin{document}
\title{An Easily Checkable Algebraic Characterization of Positive Expansivity  for Additive Cellular Automata over a Finite Abelian Group% \\ with Applications to Data Encryption
}
%\title{Deciding chaos and ergodicity for linear cellular automata \\over $(\Z/m\Z)^n$ with applications to cryptosystems%\tnoteref{thanks}
%}
%\tnotetext[thanks]{Some results of this work have been presented at the international conference MFCS 2019.}
%Linear higher-order and linear cellular automata: dyn beh and 
% its decidability
%\tnoteref{blabla}}
%\tnotetext[blabla]{This paper collects, in a more extended/generalized version, both results presented at CiE 2014 conference~\cite{Formenti2014a} and the ones presented at DCFS 2014 conference~\cite{Formenti2014b}. It also contains a considerable number of further results.}
\author[unimib]{Alberto Dennunzio}
\ead{alberto.dennunzio@unimib.it}

\author[UCA]{Enrico Formenti}
\ead{enrico.formenti@univ-cotedazur.fr}

%\author[mfo]{Darij Grinberg}
%\ead{darijgrinberg@gmail.com}

\author[unibo]{Luciano Margara}
\ead{luciano.margara@unibo.it}

\address[unimib]{Dipartimento di Informatica, Sistemistica e Comunicazione,
  Università degli Studi di Milano-Bicocca,
  Viale Sarca 336/14, 20126 Milano, Italy}
  
\address[UCA]{Universit\'e C\^ote d'Azur, CNRS, I3S, France}

%\address[mfo]{Mathematisches Forschungsinstitut Oberwolfach, Schwarzwaldstr. 9-11, 77709 Oberwolfach-Walke, Germany}

\address[unibo]{Department of Computer Science and Engineering, University of Bologna, Cesena Campus, Via dell'Universit\`a 50, Cesena, Italy}

\begin{abstract}
We provide an easily checkable algebraic characterization of positive expansivity  for Additive Cellular Automata over a finite abelian group. First of all, an easily checkable characterization of positive expansivity is provided for the non trivial subclass of Linear Cellular Automata over the alphabet $(\Z/m\Z)^n$. Then, we show how it can be exploited to decide positive expansivity for the whole class of Additive Cellular Automata over a finite abelian group. 
\end{abstract}

\begin{keyword}
 Cellular Automata  \sep Additive Cellular Automata \sep Chaos \sep Positive Expansivity%\sep Data Encryption
\end{keyword}

\maketitle

\section{Introduction}
We provide an easily checkable algebraic characterization of positive expansivity  for \emph{Additive Cellular Automata}  over a finite abelian group. 
%We prove that positive expansivity is a decidable property for \emph{Additive Cellular Automata} over a finite abelian group. 
%First of all, the decidability result is provided for the non trivial subclass of \emph{Linear Cellular Automata (LCA)} over the alphabet $(\Z/m\Z)^n$, where $m$ is any natural greater than 1. Then, it is lifted to the whole class of Additive Cellular Automata over a finite abelian group. 
First of all, an easily checkable algebraic characterization of positive expansivity is provided for the non trivial subclass of Linear Cellular Automata over the alphabet $(\Z/m\Z)^n$, where $m$ is any natural greater than 1. Then, we show how it can be exploited to decide positive expansivity for the whole class of Additive Cellular Automata over a finite abelian group. 

The main and more difficult part of this work consists of the proof of an algebraic characterization of  
positive expansivity for Linear Cellular Automata over $(\Z/p\Z)^n$, where $p$ is any prime number. To reach that result
\begin{enumerate}
\item first of all, we provide an easily checkable algebraic characterization of positively expansive Linear Cellular Automata over $(\Z/p\Z)^n$ with associated matrix that is (the traspose of one) in a rational canonical form consisting of only one block; this is the heart of our work; 
\item then, we prove that such an algebraic characterization turns out to hold also for positively expansive Linear Cellular Automata over $(\Z/p\Z)^n$ with associated matrix that is in a rational canonical form possibly consisting of 
more than one block; 
\item finally, we prove that such an algebraic characterization turns out to hold also for all positively expansive Linear Cellular Automata over $(\Z/p\Z)^n$.
 \end{enumerate}
Afterwards, the algebraic characterization of positive expansivity is extended from Linear Cellular Automata over $(\Z/p\Z)^n$ to Linear Cellular Automata over $(\Z/p^k\Z)^n$, where $k$ is any non zero natural. Finally, we show how such a characterization can be exploited to decide positive expansivity for Linear Cellular Automata over $(\Z/m\Z)^n$ and, at the end, for whole class of Additive Cellular Automata over a finite abelian group.

\section{Basic notions}
\label{sec:basic}
%
%%VIAIn this section we introduce all notations and basic concepts used in the paper. 
%\paragraph{Notations}
%Denote by $\Z$ and $\N=\set{0, 1, 2, \ldots}$ the set of integers and the set of natural numbers, respectively. 
%For any $\ell\in\N$, we denote by $[-\ell,\ell]$ the set of integers between $-\ell$ and $\ell$.

%For any natural $D>0$ and any $v\in\zd$, denote $||v||=||(v_1, \ldots, v_D)||=\max\{|v_1|, \ldots, |v_D|\}$.  
%\medskip
%VIA\subsection{One-dimensional CA}
Let $\mathbb{K}$ be any commutative ring and let $A\in\mathbb{K}^{n\times n}$ be an $n\times n$-matrix over $\mathbb{K}$. We denote by $A^T$ the transpose matrix of $A$ and by 
$\chi_{A}$ the characteristic polynomial $\det\left(
tI_{n}-A\right)  \in\mathbb{K}\left[  t\right] $ of $A$,  where $I_n$ always stands for the 
$n\times n$ identity matrix (over whatever ring we are considering).  Furthermore,  $\mathbb{K}[X,X^{-1}]$ and $\mathbb{K}[[X,X^{-1}]]$ denote the set of Laurent polynomials and series, respectively, with coefficients in $\mathbb{K}$. In particular, whenever $\mathbb{K}=\Z/m\Z$ for some natural $m> 1$, we will write $\LP_m$ and $\LS_m$ instead of $\Z/m\Z[X,X^{-1}]$ and $\Z/m\Z[[X,X^{-1}]]$, respectively. %Moreover, $\LS^n_m$ will denote the set of Laurent series with coefficients in $(\Z/m\Z)^n$.

Let $\mathbb{K}=\Z/m\Z$ for some natural $m> 1$ and let $q$ be a natural with $q<m$. If $P$ is  any polynomial from $\mathbb{K}[t]$ (resp., a Laurent polynomial from $\LP_m$) (resp., a matrix from $(\LP_m)^{n\times n}$), $P\mmod q$ denotes the polynomial (resp., the Laurent polynomial) (resp., the matrix) obtained by $P$ by taking all its coefficients modulo $q$.

%In the sequel, for every pair of elements $a,b$  of any commutative ring $\mathbb{K}$, we will write $b \mid a$ to denote that $b$ divides $a$, \ie, there exists an element $c$ from that ring such that $a=b\cdot c$. As usual, if $\mathbb{K}$ is also an Euclidean domain, $\gcd$ stands for greatest common divisor. 

%Finally, if $f$ and $g$ are two %univariate 

%polynomials over a field $K$, then
%``$f\perp g$'' will mean that the polynomials
%$f$ and $g$ are coprime (this makes sense, since the polynomial ring
%$K\left[  t\right]  $ is a Euclidean domain).

\smallskip

Let $\al$ be a finite set (also called \emph{alphabet}). A \emph{CA configuration}  (or, briefly, a \emph{configuration}) is any function from $\Z$ to $\al$.  Given a configuration $c\in\gzd$ and any %, for any 
integer $i\in\Z$, the value of $c$ in position $i$ is denoted by $c_i$. %, while, for any pair of integers $i,j\in\Z$ with $i\leq j$, $c_{[i,j]}$ stands for the word $c_i\cdots c_j\in Q^{j-i+1}$. 
The set $\gzd$, called \emph{configuration space}, is as usual 
equipped with the standard Tychonoff distance $d$.
\begin{comment}
defined as 
\[
\forall c,c^\prime\in\gzd, \quad d(c,c')=\begin{cases}
0, & \text{if}\quad  c=c', \\
2^{-\min\{|j|\,: \, j\in\Z,  \, c_j^{\,}\neq c_j^{\prime}\}},&\text{otherwise}\enspace.
\end{cases}%,\;\quad \text{where}\;n=\min\{|\v|\,:\,c(\v)\nec'(\v) \}\enspace.
\]
%The set $\gzd$ %equipped with the topology induced by $d$ 
%is a Cantor space (\ie, a compact, perfect, and totally disconnected topological space). 
%In the sequel, the configuration space $\gzd$ will be sometimes denoted by $X$. 
A cylinder is any set $[u]=\{c\in\az\mid c_{[-l,l]}=u\}$ defined by a word $u\in Q^{2l+1}$ for some $l\in\N$. Cylinders turn out to be clopen subsets of $\gzd$ and their collection forms a basis for the topology induced by $d$.
\end{comment}
\begin{comment}
A \emph{pattern} $P$ is any map from $[-\ell,\ell]^D$ to $G$, for some $\ell\in\N$. %a finite domain $Dom(P)\subseteq\zd$ taking values in $G$. 
Given any natural $\ell\in\N$, we denote by $\mathcal{P}_\ell$ the collection of all patterns %with values in $G$ and 
with domain $[-\ell,\ell]^D$. For any arbitrary natural $\ell\in\N$ and pattern $P\in\mathcal{P}_\ell$, the \emph{cylinder} individuated by  $P$ is the set $[P] = \set{c\in\gzd |\, \forall v\in [-\ell,\ell]^D, c(v)=P(v)}$. Cylinders turn out to be clopen sets and the collection  
$\set{[P] : \ell\in\N, P\in\mathcal{P}_\ell}$ of all cylinders is a basis for the topology induced by $d$.
\end{comment} 
Whenever the term \emph{linear} is involved the alphabet $\al$ is $\mathbb{K}^n$, where $\mathbb{K}=\Z/m\Z$ for some natural $m> 1$.  
Clearly, in that case both $\mathbb{K}^n$ and $(\mathbb{K}^n)^\Z$ become $\mathbb{K}$-modules in the obvious (\ie, entrywise) way.  On the other hand, whenever the term \emph{additive} is involved the alphabet $\al$ is a finite abelian group $G$ and the configuration space turns $G^\Z$ turns out to be an abelian group, too, where the group operation of $G^\Z$ is the componentwise extension of the group operation of $G$, both of them will be denoted by $+$.
%For an arbitrary fixed natural $s\geq 1$, let $\mathcal{N}$ (called $s$-sized \emph{neighborhood frame}) and \locrule (called  $s$-sized \emph{local rule})   be  an ordered set of distinct vectors 
%$u_1, \ldots, u_s$, and a function from $G^s$ to $G$, respectively. 

\smallskip

A \emph{one-dimensional CA} (or, briefly, a \emph{CA}) over $\al$ 
is a pair $(\gzd,\glorule)$, where $\glorule\colon\gzd\to\gzd$
is the uniformly continuous transformation (called \emph{global rule}) defined as %on the basis of $\mathcal{N}$ and \locrule as 
%follows%%
%%\begin{equation*} \label{FDdef}
$
\forall c\in\gzd, \forall i\in\Z, \glorule(c)_i=\locrule(c_{i-r}, \ldots, c_{i+r}),%
%\locrule\left(c_{[i-r, i+r]}\right), 
$
%%\end{equation*}
%on the basis of a
for some fixed natural number $r\in\N$ (called \emph{radius}) and some fixed function $\locrule\colon {\al}^{2r+1}\to \al$ (called  \emph{local rule} of radius $r$).   
%\emph{Elementary CA (ECA)} are those CA defined by elementary local rules, \ie, the ones where $Q=\{0,1\}$ and $r=1$. According to~\cite{Wo86}, each elementary rule, or ECA, is associated with a natural number %$n_f\in\{0, \ldots, 255\}$  
%from the set $\{0, \ldots, 255\}$.  
In the sequel, when no misunderstanding is possible, we will sometimes identify any CA with its global rule. 
%In the sequel, we will simply refer to a CA over the alphabet $A$ without mentioning the local rule and the neighbourhood frame on which it is based.
\begin{comment}
It is well known that \glorule is an uniformly continuous function with respect to  the metric $d$ and any transformation $\glorule\colon\gzd\to\gzd$ is the global rule of a one-dimensional CA 
iff it is uniformly continuous and shift-commuting (Curtis-Hedlund-Lyndon theorem from~\cite{hedlund69}), \ie, $\glorule\circ\sigma=\sigma\circ\glorule$, where $\sigma:\gzd\to\gzd$ is  the \emph{shift map} defined by $\forall c\in\gzd, \forall i\in\Z, \sigma(c)_i=c_{i+1}$.  In the sequel, for a sake of simplicity, we will identify a CA with its global rule. 
\end{comment}
%\smallskip
%VIA\subsection{Linear CA}

\smallskip

We recall that a CA $(\gzd,\glorule)$ is \emph{positively expansive} if for some constant $\varepsilon>0$ it holds that for any pair of configurations $c,c'\in$ there exists a natural number $\pow$ such that $d(\glorule^{\pow}(c),\glorule^{\pow}(c'))\geq\varepsilon$. We stress that CA positive expansivity is a condition of strong \emph{chaos}.  Indeed, on a hand, positive expansivity is a stronger condition than \emph{sensitive dependence on the initial conditions}, the latter being the essence of the chaos notion. On the other hand, any positively expansive CA is also \emph{topologically transitive} and, at the same time, it has \emph{dense periodic orbits}. Therefore, any positively expansive CA is chaotic according to the Devaney definition of chaos (see~\cite{devaney89}, for the definitions of chaos, sensitive dependence on the initial conditions, topological transitivity, and denseness of dense periodic orbits). Finally, we recall that if a CA $\glorule$ is positively expansive then $\glorule$ is surjective.
\paragraph{Linear and Additive CA} 
Let $\mathbb{K}=\Z/m\Z$ for some natural $m>1$ and let $n\in\N$ with $n\geq 1$. Let $G$ be a finite abelian group.

\smallskip

A local rule $\locrule\colon (\mathbb{K}^n)^{2r+1}\to \mathbb{K}^n$ of radius $r$ is said to be \emph{linear}
if it is defined by  $2r+1$ matrices  
$A_{-r},\ldots, A_r\in \mathbb{K}^{n\times n}$
as follows: 
%\[
%%VIA \begin{equation}\label{linearrule}
$
\forall (x_{-r}, \ldots,x_r)\in(\mathbb{K}^n)^{2r+1}, \locrule(x_{-r},\ldots,x_r) = 
\sum_{i=-r}^r A_i\cdot x_i
\enspace.
$ 
%%VIA\end{equation}
%In that case, we can build the following Laurent polynomial 
%\[ 
%M(X)=\sum\limits_{i=-r}^{r} A_i X^{-i}\in\mathbb{K}^{n\times n}[X,X^{-1}]\enspace.
%\] 
%\] 
%
%\begin{definition}[$1$-dimensional Linear CA]
%
A one-dimensional \emph{linear CA (LCA)} over $\mathbb{K}^n$ is a CA \glorule\xspace
%$((\mathbb{K}^n)^{\Z}, \glorule\xspace)$ 
based on a linear local rule. 
\begin{comment}
Clearly, the global rule of a LCA over $\mathbb{K}^n$ is an endomorphism of $(\mathbb{K}^n)^\Z$.
%
If the local rule of a one-dimensional LCA \glorule is defined by matrices $A_{-r},\ldots, A_0,\ldots, A_r\in \mathbb{K}^{n\times n}$,  %based on a local rule of radius $r$, 
then 
\end{comment}
The Laurent polynomial (or matrix)
\[
A=\sum\limits_{i=-r}^{r} A_i X^{-i}\in\mathbb{K}^{n\times n}[X,X^{-1}]\cong (\LP_m)^{n\times n}
\]
is said to be the % \emph{formal power series (fps)}, or, 
\emph{the matrix} \emph{associated with \glorule}. 

\smallskip 

%\textbf{DA MODIF} We recall that the dynamical 
%behavior of LCA over $\mathbb{K}^n$ when $n=1$ has been successfully investigated 
%by means of $M(X)$ (see~\cite{ito83,ManziniM99}). In that case, all the dynamical and ergodic properties, including those we will deal with in this paper, have been characterized and, in particular, they turn out to be decidable. For this reason, in the sequel we will deal with naturals $n>1$.

We now recall the notion of Additive CA, a wider class than LCA. An \emph{Additive CA} over $G$ is a CA $(G^\Z,\glorule)$ where the global rule $\glorule: G^\Z\to G^\Z$ is an endomorphism of $G^\Z$. 
We stress that the local rule $\locrule: G^{2r+1}\to G$ of an Additive CA of radius $r$ over a finite abelian group $G$ can be written as
%\begin{equation}
%\label{decomp}
$\forall (x_{-r}, \ldots, x_{r})\in G^{2r+1},  \locrule(x_{-r},\dots,x_r)=\sum_{i=-r}^{r} \locrule_i(x_i)$, 
%\end{equation}
where the functions $\locrule_i$ are endomorphisms of $G$. Moreover, as a consequence of the application of the fundamental theorem of finite abelian groups to Additive CA (see~\cite{DennunzioFGM21INS}, for details), without loss of generality we can assume that 
%%
\begin{comment}
The fundamental theorem of finite abelian groups states 
that every finite abelian group $G$ is isomorphic to
$\bigoplus_{i=1}^{h} \zetaki$ where
the numbers $k_1,\dots, k_h$ are powers of (not necessarily distinct) 
primes and $\oplus$ is the direct sum operation. %\textbf{DIRE BENE COS'E' LA SOMMA DIRETTA DI DUE AC ADDITIVI, CHIARENDO INOLTRE IL LEGAME CON IL PRODOTTO CARTESIANO.} 
Hence, the global rule  $F$ of an Additive CA over $G$ splits into the 
direct sum of a suitable number $h'$ of Additive CA 
over subgroups $G_1,\dots,G_{h'}$ with $h' \leq h$ and such that 
$\gcd(|G_i|,|G_j|)=1$ for each pair of distinct $i,j\in\{1, \ldots, h'\}$.
Each of them can be studied separately and then the investigation of the 
dynamical behavior of $F$ can be carried out by combining together 
the results obtained for each component. 
\end{comment}
$G=\zetapku\times\ldots\times \zetapkn$ for some naturals $k_1, \ldots, k_n$ with $k_1\geq k_2\geq \ldots\geq k_n$.

%Let $\hat{G}=\zetapku\times\ldots\times \zetapku$. 
Let $\hat{G}=(\zetapku)^n$ and let $\emb: G\to \hat{G}$ be the map defined as 
$
\forall h\in G, \forall i=1,\dots,n, \emb(h)^i=h^i\, p^{k_1-k_i}\enspace,
$
where, for a sake of clarity, we stress that $h^i$ denotes the $i$-th component of $h$, while $p^{k_1-k_i}$ is just the $(k_1-k_i)$-th power of $p$. Let $\Emb: G^{\Z}\to \hat{G}^{\Z}$ be the componentwise extension of $\emb$, \ie, the function defined as 
$
\forall \ccc\in G^{\Z},  \forall j\in\Z,  \Emb(\ccc)_j=\emb(\ccc_j).
$
The function $\Emb$ turns out to be continuous and injective. % Since every configuration $\ccc\in G^{\Z}$ (or $\hat{G}^{\Z}$) is associated with the Laurent series $\P_{\ccc}(X)\in\LSG$ (or $\LSGt$), with an abuse of notation we will sometimes consider $\Emb$ as map from $\LSG$ to $\LSGt$ with the obvious meaning.

\smallskip
We recall that for any Additive CA over $G$ an LCA over $(\zetapku)^n$ associated with it can be defined as follows. With a further abuse of notation, in the sequel we will write $p^{-m}$ with $m\in\N$ even if this quantity might not exist in $\zetapk$. However, we will use it only when  it multiplies $p^{m'}$ for some integer $m'>m$. In such a way $p^{m'-m}$ is well-defined in $\zetapk$ and we will note it as product $p^{-m} \cdot p^{m'}$.  

Let $(G^{\Z},F)$ be any Additive CA and let $\locrule: G^{2r+1}\to G$ be its local rule defined, by $2r+1$ endomorphisms $\locrule_{-r}, \ldots, \locrule_{r}$ of $G$. For each $z\in\{-r, \ldots, r\}$, let $A_z = (a^{(z)}_{i,j})_{1\leq i\leq n,\, 1\leq j\leq n}\in (\zetapku)^{n\times n}$ be the matrix such that 
$
\forall i,j\in\{1, \ldots, n\}, a^{(z)}_{i,j}=p^{k_j-k_i} \cdot \locrule_z(e_j)^i
$. 
The \emph{LCA associated with the Additive CA $(G^{\Z},F)$} is $(\hat{G}^\Z, L)$, where $L$ is defined by $A_{-r}, \ldots, A_r$ or, equivalently, by $A=\sum_{z=-r}^r A_z X^{-z}\in\LPKuno^{n\times n}$.
%$\left( ( (\Z/k_1)^{n}), L \right)$
We stress that the following diagram commutes
\[
\begin{CD}
   G^{\Z}@>F>>&G^{\Z}\\
   @V{\Emb}VV&@VV{\Emb}V\\
   \hat{G}^{\Z}@>>L>&\hat{G}^{\Z}
\end{CD}\enspace, 
\]
\ie, $L\circ \Emb=\Emb\circ F$. Therefore,  $(\hat{G}^\Z, L)$ is said to be the LCA associated with $(G^{\Z},F)$ \emph{via the embedding $\Emb$}. In general, $(G^{\Z},F)$ is not topologically conjugated (\ie, homeomorphic) to $(\hat{G}^\Z, L)$ but $(G^{\Z},F)$ is a subsystem of  $(\hat{G}^\Z, L)$ and the latter condition alone is not  enough in general to lift dynamical properties from a one system to the other one. Despite this obstacle, in the sequel we will succeed in doing such a lifting, as far as positive expansivity is concerned.

\section{Results}
\begin{definition}[Positive and Negative Degree]\label{deg+}
The \emph{positive} (resp., \emph{negative}) degree of any given polynomial $\coeff \in \LP_m$ with $\coeff\neq 0$, denoted by $\deg^+(\coeff)$ (resp., $\deg^-(\coeff)$), is the maximum (resp, minimum) value among the degrees of the monomials of $\coeff$. Such notions extend to any element $\upsilon\neq 0$ of $\LS^n_m$ when $\upsilon$ is considered as a formal power series with coefficients in $(\Z/m\Z)^n$ instead of a vector of $n$ elements from $\LS_m$ and with the additional defining clause that $\deg^+(\upsilon)=+\infty$ (resp., $\deg^-(\upsilon)=-\infty$) if that maximum (resp., minimum) does not exist. Furthermore, the previous notions are extended to both $\coeff=0$ and $\upsilon= 0$ as follows: $\deg^+(0)=-\infty$ and $\deg^-(0)=+\infty$.
\end{definition}
\begin{example}
The following are the values of the positive and negative degree of some polynomials:
$\deg^+(X^{-3}+X^{-2})=-2$,
$\deg^+(X^{-3}+X^{-2}+1)=0$, $\deg^+(X^{-3}+X^{-2}+1+X^{4})=4$,
$\deg^+(1)=0$\\
$\deg^-(X^{3}+X^{2})=2$,
$\deg^-(X^{3}+X^{2}+1)=0$, $\deg^-(X^{-3}+1+X^{4})=-3$, 
$\deg^-(1)=0$
\end{example}

\begin{definition}[Expansive Polynomial and Expansive Matrix]\label{exppoly}
 Let $\pol(t)=\coeff_0+ \coeff_1 t +\cdots + \coeff_{n-1} t^{n-1} + t^{n}$ be any polynomial from $\LP_m[t]$. We say that $\pol(t)$ is \emph{expansive} if  $\coeff_0\neq 0$ and both the following two conditions are satisfied: 
\begin{itemize}
\item[$(i)$] $\deg^+(\coeff_0)>0$ and %$\forall i \in \{2,\dots , n-1\}: \deg^+(\coeff_0)>\deg^+(\coeff_i)$ 
$\deg^+(\coeff_0)>\deg^+(\coeff_i)$ for every $i \in \{1,\dots , n-1\}$;% such that $\coeff_i\neq 0$;
\item[$(ii)$]  $\deg^-(\coeff_0)<0$ and $\deg^-(\coeff_0)<\deg^-(\coeff_i)$  for every $i \in \{1,\dots , n-1\}$;% such that $\coeff_i\neq 0$.
%$\forall i \in \{2,\dots , n-1\}: \deg^-(\coeff_0)<\deg^-(\coeff_i)$
\end{itemize}
A matrix $A\in \LP_m^{n\times n}$ is said to be \emph{expansive} if its characteristic polynomial is expansive.
\end{definition}
\begin{lemma}\label{pura}
 For any three polynomials $\pi, \rho, \tau\in \LP_p[t]$ such that $\pi=\rho\cdot \tau$ it holds that $\pi$ is expansive if and only if  $\rho$ and $\tau$ are both expansive.
  \end{lemma}
\begin{proof}
Choose arbitrarily three polynomials 
\begin{align*}
\pi(t)&=\alpha_0+ \alpha_1 t +\cdots + \alpha_{n-1} t^{n-1} + t^{n}\\
\rho(t)&=\beta_0+ \beta_1 t +\cdots + \beta_{n_1-1} t^{n_1-1} + t^{n_1}\\
\tau(t)&=\gamma_0+ \gamma_1 t +\cdots + \gamma_{n_2-1} t^{n_2-1} + t^{n_2}
\end{align*}
such that $\pi=\rho\cdot \tau$. Obviously,
\begin{equation}
\label{alpha}
\alpha_i=\sum_{i_1\leq i, i_2\leq i:  i=i_1+i_2} \beta_{i_1} \gamma_{i_2}
\end{equation}
for every $i\in\{0,\ldots, n-1\}$ and, in particular, $\alpha_0=\beta_0\gamma_0$.

Assume that both $\rho$ and $\tau$ are both expansive. Hence, $\alpha_0=\beta_0\gamma_0\neq 0$ and  
\[
\deg^+(\alpha_0)=\deg^+(\beta_0\gamma_0)= \deg^+(\beta_0) + \deg^+(\gamma_0) > \deg^+(\beta_{i_1}) + \deg^+(\gamma_{i_2})= \deg^+(\beta_{i_1} \gamma_{i_2})\enspace, 
\]
for every $i_1\in\{1,\ldots, n_1-1\}$ and $i_2\in\{1,\ldots, n_2-1\}$ such that $\beta_{i_1}\neq 0$ and $\gamma_{i_2}\neq 0$ (by Definition~\ref{exppoly} a symmetric inequality regarding $\deg^-$ also holds). Let $i\in\{1,\ldots, n-1\}$ be any index such that $\alpha_{i}\neq 0$. We can  write 
\[
\deg^+(\alpha_i)\leq \deg^+(\beta_{i_1} \gamma_{i_2})< \deg^+(\alpha_0)
\]
for every $i_1\in\{1,\ldots, n_1-1\}$ and $i_2\in\{1,\ldots, n_2-1\}$ such that $i=i_1+i_2$, $\beta_{i_1}\neq 0$ and $\gamma_{i_2}\neq 0$. Clearly, condition $(ii)$ from Definition~\ref{exppoly} also holds, as far as $\deg^-(\alpha_i)$ is concerned. Thus, $\pi$ is expansive.

We prove now that if $\pi$ is expansive then $\rho$ and $\tau$ are both expansive. 
Assume that the consequent is not true. We deal with the two following cases: (1) $\rho$ is expansive but $\tau$ is not (by symmetry, we do not consider the situation in which $\tau$ is expansive but $\rho$ is not); (2) neither $\rho$ nor $\tau$ are expansive. 
\begin{enumerate}
\item[(1)] Suppose that $\rho$ is expansive but $\tau$ is not. If $\gamma_0=0$ then it trivially follows that $\pi$ is not expansive. Otherwise, 
%Let $$mdeg=\max_{1\leq i \leq n_1} \{deg^+(\gamma_i) \}$$   $h$ be the minimum value such that $deg^+(\gamma_h)= degmax_r$.
let $min$ be the minimum index such that $deg^+(\gamma_{min})=\max_{0\leq i_2 < n_2} \{deg^+(\gamma_{i_2}): \gamma_{i_2}\neq 0\}$. Since $\rho$ is expansive, $\beta_0\gamma_{min}$ is the (only) addend of maximum degree in the sum from Equation~\eqref{alpha} considered for $i=min$. Thus, $deg^+(\alpha_{min})= deg^+(\beta_0 \gamma_{min})$. Furthermore, $deg^+(\alpha_{min})= deg^+(\beta_0) + \deg^+(\gamma_{min})\geq  deg^+(\beta_0) + \deg^+(\gamma_0)= deg^+(\beta_0 \gamma_0)=deg^+(\alpha_0)$. In a symmetric way, one also gets that $deg^-(\alpha_{i})\leq  deg^-(\alpha_0)$ for some $i\in\{1, \ldots, n-1\}$. %This implies that $\pi$ is not expansive.
\item[(2)] Suppose that neither $\rho$ nor $\tau$ are expansive. If  $\beta_0=0 \vee\gamma_0=0$ then it trivially follows that $\pi$ is not expansive. Otherwise, 
let $max_1$ and $max_2$ be the maximum indexes such that $\deg^+(\beta_{max_1})=\max_{0\leq i_1 < n_1} \{\deg^+(\beta_{i_2}): \beta_{i_2}\neq 0\}$ and  $\deg^+(\gamma_{max_2})=\max_{0\leq i_2 < n_2} \{\deg^+(\gamma_{i_2}): \gamma_{i_2}\neq 0\}$, respectively.
%Let $degmax_q=\max_{1\leq i \leq n_1}  \{deg^+(\beta_i) \}$ and
%$h$ be the maximum value  such that $deg^+(\beta_h)= degmax_q$.
%Then $deg^+(\beta_{h+1}),\dots, deg^+(\beta_{n_1})$ are all strictly larger than $degmax_q$ \\
%Let $degmax_r=\max_{1\leq i \leq n_2}  \{deg^+(\gamma_i) \}$ and 
%$k$ be the maximum value  such that $deg^+(\gamma_k)= degmax_r$.
%Then $deg^+(\gamma_{k+1}),\dots, deg^+(\gamma_{n_2})$ are all strictly larger than $degmax_r$ \\
Consider now Equation~\ref{alpha} for $i=max_1+max_2$. %Clearly, for each pair $i_1, i_2$ of indexes of the sum with $i_1\neq max_1$ and $i_2\neq max_2$ both the following implications hold: if $i_1<max_1$ then $i_2>max_2$ and if $i_2<max_2$ then $i_1>max_1$. 
Take any pair of indexes $i_1, i_2$ of the sum such that $i_1\neq max_1$, $i_2\neq max_2$, $\beta_{i_1}\neq 0$, and $\gamma_{i_2}\neq 0$. %Regarding the corresponding element $\beta_{i_1}\gamma_{i_2}$ in the sum,  it holds that 
%$\deg^+(\beta_{i_1}\gamma_{i_2})=
If $i_1<max_1$ (and then $i_2>max_2$), resp.,  if $i_1>max_1$ (and then $i_2<max_2$), we get that 
$$
\deg^+(\beta_{i_1}) + \deg^+(\gamma_{i_2})< \deg^+(\beta_{i_1}) + \deg^+(\gamma_{max_2})\leq \deg^+(\beta_{max_1}) + \deg^+(\gamma_{max_2})\enspace,
$$
resp.,
$$
\deg^+(\beta_{i_1}) + \deg^+(\gamma_{i_2})< \deg^+(\beta_{max_1}) + \deg^+(\gamma_{i_2})\leq \deg^+(\beta_{max_1}) + \deg^+(\gamma_{max_2})\enspace. 
$$ Hence,  $\deg^+(\beta_{i_1}\gamma_{i_2})<\deg^+(\beta_{max_1}\gamma_{max_2})$. This implies that $deg^+(\alpha_{max_1+max_2})=\deg^+(\beta_{max_1}\gamma_{max_2})$. Moreover, $deg^+(\alpha_{max_1+max_2})=\deg^+(\beta_{max_1}\gamma_{max_2})\geq \deg^+(\beta_{0}\gamma_{0})=\deg^+(\alpha_{0})$ and in a symmetric way, one also gets that $deg^-(\alpha_{i})\leq  deg^-(\alpha_0)$ for some $i\in\{1, \ldots, n-1\}$.
\end{enumerate}
Therefore, in both cases it follows that $\pi$ is not expansive and this concludes the proof.
\end{proof}
\begin{definition}\label{sk}
%Let $K$ be either $\LP_m$ or $\LS^n_m$. \textbf{mi sa che serve anche $\LP^n_m$...} 
Let $K=\LS^n_m$. 
For any $s\in\Z$ we define the following sets:

$Right(K,s)=\{\upsilon\in K: deg^-(\upsilon)= s \}$,

$Right^*(K,s)=\{\upsilon\in K: deg^-(\upsilon)\geq s \}$,

$Left(K,s)=\{\upsilon\in K: deg^+(\upsilon)= s \}$,

$Left^*(K,s)=\{\upsilon\in K: deg^+(\upsilon)\leq s \}$.
%$Right(\s_p,k)=\{v\in \s_p: deg^-(v)= k \}$\\
%$Left(\s_p,k)=\{v\in \s_p: deg^+(v)= k \}$
\end{definition}
Definition~\ref{sk} along with the notion of positively expansive CA when reformulated for LCA and the compactness of the configuration space immediately allow stating the following  

\begin{lemma}\label{expansivity}
Let $\glorule$ be any LCA over $(\Z/m\Z)^n$ and let  $A\in\LP_{m}^{n\times n}$ be the matrix associated with $\glorule$, where $m$ and $n$ are any two naturals with $m>1$ and $n>1$. 
%where $p$ and $n$ are any two naturals such that $p$ is prime and $n>1$. 
The LCA $\glorule$ is not positively expansive if and only if there exists an integer  $s>0$ such that at least one of the following two conditions holds
\begin{enumerate}
\item[-]
$\exists \upsilon\in Left(\LS^n_m,0)\setminus\{0\}$:
$\forall \ell>0, \,  A^{\ell}\upsilon \in Left^*(\LS^n_m,s)$
\item[-] $\exists \upsilon\in Right(\LS^n_m,0)\setminus\{0\}$:  
$\forall \ell>0, \, A^{\ell}\upsilon \in Right^*(\LS^n_m,-s)$ 
 \end{enumerate}
% $2.$ $\forall v\in Right(\s_p,0)$
%$\forall k<0, \exists i>0:\  degmin(M^iv)<k$ and\\
%$\forall v\in Left(\s_p,0)$
% $\forall k>0, \exists i>0:\  degmax(M^iv)>k$
On the contrary, the LCA $\glorule$ is positively expansive if and only if there exists a natural number  
%\ell_{\heartsuit}>0 
$\hat{\ell}>0$ such that for any $d\in\Z$ and any $\upsilon\in Left(\LS^n_m,d)$ it holds that $A^{\ell}\upsilon \in Left(\LS^n_m,d+1)$ for some $\ell\leq \hat{\ell}$ and, symmetrically, for any $d\in\Z$ and any $\upsilon\in Right(\LS^n_m,d)$ it holds that $A^{\ell}\upsilon \in Right(\LS^n_m,d-1)$ for some $\ell\leq \hat{\ell}$.
\end{lemma}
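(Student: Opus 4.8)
The plan is to translate everything into the formal power series picture and then establish the two stated equivalences, which --- up to a single compactness argument --- are negations of one another. Concretely, I would identify a configuration $c\in((\Z/m\Z)^n)^\Z$ with the series $\sum_{i\in\Z}c_iX^i\in\LS^n_m$, so that $\glorule$ becomes multiplication by $A$, $\glorule^\ell$ multiplication by $A^\ell$, and the shift $\sigma$ (with $\sigma(c)_i=c_{i+1}$) multiplication by $X^{-1}$. Because the Tychonoff distance is invariant under the entry-wise group operation, for $c\ne c'$ one has $d(\glorule^\ell(c),\glorule^\ell(c'))\ge 2^{-s}$ precisely when $A^\ell(c-c')$ has a non-zero coefficient in some position of $[-s,s]$. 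Hence, writing $V_s:=\{\upsilon\in\LS^n_m:\text{ all coefficients of }A^\ell\upsilon\text{ in positions }[-s,s]\text{ vanish, for every }\ell\ge0\}$, the LCA $\glorule$ is positively expansive iff $V_s=\{0\}$ for some $s$, and not positively expansive iff $V_s\ne\{0\}$ for every $s\in\N$. I would also record that each $V_s$ is a closed $\Z/m\Z$-submodule of $\LS^n_m$, that $V_{s+1}\subseteq V_s$, and that $\sigma^{\pm1}(V_s)\subseteq V_{s-1}$, since a translate of the window $[-s,s]$ still contains $[-(s-1),s-1]$.

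The easy implication --- a one-sided obstruction rules out positive expansivity --- I would prove by a shift trick. Suppose the first bulleted condition holds, with $s>0$ and $\upsilon\in Left(\LS^n_m,0)\setminus\{0\}$ (Definition~\ref{sk}): $\upsilon$ is supported in $(-\infty,0]$, has non-zero coefficient at $0$, and $\deg^+(A^\ell\upsilon)\le s$ for every $\ell$. For $M>s$ the series $X^{-M}\upsilon$ is supported in $(-\infty,-M]$, so $A^\ell(X^{-M}\upsilon)=X^{-M}(A^\ell\upsilon)$ is supported in $(-\infty,s-M]$, hence has all its coefficients in positions $[-(M-s-1),M-s-1]$ equal to $0$ for every $\ell\ge0$. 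Thus $X^{-M}\upsilon\in V_{M-s-1}\setminus\{0\}$; letting $M\to\infty$ gives $V_k\ne\{0\}$ for all $k$, that is, $\glorule$ is not positively expansive. The second bulleted condition is handled by the mirror argument (multiply by $X^{M}$, use $\deg^-$).

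For the converse I would assume $\glorule$ is not positively expansive, so $V_k\ne\{0\}$ for every $k$, and build a one-sided obstruction by compactness. For each $k$ pick $x_k\in V_k\setminus\{0\}$ whose non-zero coefficient nearest the origin sits in a position $p_k$ of least possible absolute value; since $x_k$ vanishes throughout $[-k,k]$ we have $|p_k|\ge k+1\to\infty$. After passing to a subsequence and, if necessary, replacing $\glorule$ by its reflection $X\mapsto X^{-1}$ (which swaps the two bulleted conditions), assume $p_k>0$ and set $y_k:=X^{-p_k}x_k$; then $y_k$ has non-zero coefficient at $0$, vanishes throughout $[-2p_k+1,-1]$, and $A^\ell y_k=X^{-p_k}(A^\ell x_k)$ vanishes throughout $[-k-p_k,k-p_k]$ for all $\ell$. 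By compactness of $\LS^n_m$ --- and since the finitely many possible values of the coefficient at $0$ may be taken constant along a subsequence --- I may assume $y_k\to y$ with non-zero coefficient at $0$; as the vanishing window to the left of $0$ grows without bound, $y$ is supported in $[0,\infty)$, that is $y\in Right(\LS^n_m,0)\setminus\{0\}$. Splitting $y_k=y_k'+y_k''$ into its part $y_k'$ on $[0,\infty)$ and its far left tail $y_k''$ on $(-\infty,-2p_k]$, the tail escapes to $-\infty$, so $y_k'\to y$ and $A^\ell y_k'\to A^\ell y$ by continuity; moreover, whenever $\ell r<p_k$ (with $r$ the radius) the supports of $A^\ell y_k'\subseteq[-\ell r,\infty)$ and $A^\ell y_k''\subseteq(-\infty,-2p_k+\ell r]$ are disjoint, so $A^\ell y_k'$ alone must vanish throughout $[-k-p_k,k-p_k]$, and combined with $A^\ell y_k'\subseteq[-\ell r,\infty)$ this forces $\deg^-(A^\ell y_k')\ge-(p_k-k)$. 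Fixing $\ell$ and letting $k\to\infty$ (eventually $\ell r<p_k$) yields $\deg^-(A^\ell y)\ge-\sup_k(p_k-k)$ for all $\ell$, which is exactly the second bulleted condition for $y$ with $s:=\sup_k(p_k-k)$, \emph{provided that supremum is finite}. The step I expect to be the main obstacle is precisely ruling out the pathology $p_k-k\to\infty$: one must use the minimality built into the choice of $p_k$ together with $\sigma^{\pm1}(V_k)\subseteq V_{k-1}$ --- which lets one push the innermost coefficient of a trapped configuration one step towards the origin at the cost of one unit of buffer --- to show that if $p_k-k$ were unbounded then $V_{k_0}=\{0\}$ for some $k_0$, contradicting non-expansivity.

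Finally, to obtain the quantitative ``$\exists\hat\ell$'' formulation I would combine the two implications above with a routine compactness step converting the qualitative statements into uniform ones: iterating the $\hat\ell$-clause for the left side shows that $\deg^+(A^\ell\upsilon)$ grows without bound uniformly over $\upsilon\in Left(\LS^n_m,0)$, hence that there is no left obstruction; conversely, absence of a left obstruction means $\deg^+(A^\ell\upsilon)$ is unbounded for each such $\upsilon$, and since $\deg^+(A\upsilon)\le\deg^+(\upsilon)+r$ --- so that only boundedly many coefficients are in play at each bounded time --- a König-type argument on the tree of configurations in $Left(\LS^n_m,0)$ makes this uniform, producing the required $\hat\ell$; the right side is symmetric, and one takes the larger of the two bounds. (Alternatively, once left- and right-expansivity of one-dimensional cellular automata are rephrased for LCA via $\deg^+$ and $\deg^-$, the whole Lemma can be read off from the known characterization of positive expansivity as the conjunction of left- and right-expansivity.)
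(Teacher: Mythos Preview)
The paper does not prove this lemma in any detail; it simply asserts that the statement follows ``immediately'' from the definitions together with the compactness of the configuration space. So you are being compared against a one-line appeal to folklore, and your parenthetical alternative at the end --- read the lemma off from the known equivalence ``positively expansive $\Leftrightarrow$ left-expansive and right-expansive'' for one-dimensional CA --- is exactly what that one-liner has in mind.

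Your passage to formal power series and your proof of the easy direction (either bulleted obstruction kills positive expansivity via a shift) are correct.

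For the hard direction, the gap you flag is real, and the fix you sketch does not close it. From the minimality of $|p_k|$ together with $\sigma^{\pm1}(V_k)\subseteq V_{k-1}$ one obtains only $|p_{k-1}|\le |p_k|-1$, i.e.\ the sequence $|p_k|-k$ is \emph{non-decreasing}; this is the wrong monotonicity for an upper bound, and nothing you have written rules out $|p_k|-k\to\infty$. In that regime your limit $y$ inherits only the trivial estimate $\deg^-(A^\ell y)\ge -\ell r$, which yields no obstruction.

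The clean repair avoids the limit extraction entirely and proves the contrapositive. Assume neither bulleted condition holds; compactness of the closed sets $Left(\LS^n_m,0)$ and $Right(\LS^n_m,0)$ gives a uniform $\hat\ell$ as in your last paragraph. Set $N=r\hat\ell$ and show $V_N=\{0\}$. For $\omega\in V_N$, split $\omega=\omega^-+\omega^+$ along $\pm(N{+}1)$. The key point is that $V_N$ is $A$-invariant and, for $0\le\ell\le\hat\ell$, the supports of $A^\ell\omega^-\subseteq(-\infty,-1]$ and $A^\ell\omega^+\subseteq[1,\infty)$ stay on opposite sides of $[-N,N]$, so the vanishing of $A^\ell\omega$ on $[-N,N]$ forces each piece back into $(-\infty,-N{-}1]$ resp.\ $[N{+}1,\infty)$; in particular $(A^{\hat\ell}\omega)^-=A^{\hat\ell}(\omega^-)$. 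Iterating with $A^{\hat\ell}\omega\in V_N$ in place of $\omega$ gives $\deg^+(A^\ell\omega^-)\le -N{-}1$ for \emph{all} $\ell\ge0$. If $\omega^-\neq0$, a shift of $\omega^-$ then witnesses the first bulleted condition --- contradiction; symmetrically $\omega^+=0$, so $\omega=0$. This simultaneously proves the first equivalence and manufactures the $\hat\ell$ for the second, making your separate K\"onig step unnecessary.
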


\begin{lemma}\label{branch}
Let $\glorule$ be any surjective LCA over $(\Z/m\Z)^n$ and let  $A\in\LP_{m}^{n\times n}$ be the matrix associated with $\glorule$, where $m$ and $n$ are any two naturals with $m>1$ and $n>1$. 
%where $p$ and $n$ are any two naturals such that $p$ is prime and $n>1$. 
Then there exists an integer constant $c\geq 0$ 
(that only depends on $A$) such that all the following conditions $\mathcal{C}_1, \mathcal{C}_2,\mathcal{C}_3$, and $\mathcal{C}_4$ hold.

\begin{enumerate}
\item[$\mathcal{C}_1$:] $\forall \upsilon\in Left(\LS^n_m,0)\ \exists \omega\in Left(\LS^n_m,h)$ for some $-c\leq h \leq c$ such that  $A \omega=\upsilon$ %
\item[$\mathcal{C}_2$:] $\forall \upsilon\in Right(\LS^n_m,0)\ \exists \omega\in Right(\LS^n_m,h)$ for some $-c\leq h \leq c$ such that  $A \omega =\upsilon$ 
\item[$\mathcal{C}_3$:] $\forall \upsilon\in Left(\LS^n_m,0),\, A \upsilon \in Left(\LS^n_m,h)$ for some $-c\leq h \leq c$
\item[$\mathcal{C}_4$:] $\forall \upsilon\in Right(\LS^n_m,0),\, A \upsilon \in Right(\LS^n_m,h)$ for some $-c\leq h \leq c$
\end{enumerate}
\end{lemma}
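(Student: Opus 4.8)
The claim is that a surjective LCA $\glorule$ with associated matrix $A\in\LP_m^{n\times n}$ admits a uniform bound $c$ controlling how much one step of $A$ (and one step of a formal inverse) can shift the degree of a configuration sitting on the "left edge" or "right edge". The natural starting point is to recall that $A$ acts on a formal power series $\P_c(X)=\sum_i c_iX^i$ by $A\cdot\P_c$, and that multiplying by the Laurent-polynomial matrix $A=\sum_{z=-r}^r A_zX^{-z}$ can shift coefficients by at most $r$ in either direction; this is exactly what gives conditions $\mathcal{C}_3$ and $\mathcal{C}_4$ with $c=r$, since $\deg^+(A\upsilon)\le \deg^+(\upsilon)+r$ and $\deg^-(A\upsilon)\ge\deg^-(\upsilon)-r$ termwise, and one just has to check the edge coefficient of $A\upsilon$ cannot be forced to vanish in a way that pushes the degree further — but that is not an issue because we only need an upper bound $\deg^+(A\upsilon)\le c$ and a lower bound $\deg^-(A\upsilon)\ge -c$ when $\deg^+(\upsilon)=0$ resp. $\deg^-(\upsilon)=0$, and both follow immediately from $|z|\le r$. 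So the two "forward" conditions $\mathcal{C}_3,\mathcal{C}_4$ are essentially free with $c=r$.

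**The harder half.** Conditions $\mathcal{C}_1$ and $\mathcal{C}_2$ are where surjectivity must be used: we need, for every $\upsilon$ with $\deg^+(\upsilon)=0$, a preimage $\omega$ with $A\omega=\upsilon$ whose positive degree is bounded by a constant independent of $\upsilon$ (and symmetrically for the negative degree / right edge). The approach I would take is to work with the Smith normal form of $A$ over the principal ideal domain — actually over $\LP_m$ one must be careful since $\Z/m\Z$ is not a field, so the cleaner route is to first reduce to $m=p$ prime (the paper's stated strategy is to build the prime case first, then lift to $p^k$ and then to $m$), where $\LP_p=\F_p[X,X^{-1}]$ is a PID. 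Then $A=U D V$ with $U,V$ invertible over $\LP_p^{n\times n}$ and $D$ diagonal; surjectivity of $\glorule$ is equivalent (this should be a known characterization, or provable directly) to the diagonal entries of $D$ being units times powers of $X$, i.e. $\glorule$ surjective $\iff$ $\det A$ is a monomial $uX^k$ with $u$ a unit. Consequently $A$ has a two-sided inverse $A^{-1}=V^{-1}D^{-1}U^{-1}$ which is again a matrix over $\LP_p$ (Laurent polynomials!), hence has bounded positive and negative degrees of all its entries. Setting $c$ to be the maximum of $r$ and the width of the "band" of $A^{-1}$, the preimage $\omega=A^{-1}\upsilon$ satisfies $\deg^+(\omega)\le\deg^+(\upsilon)+c=c$ and $\deg^-(\omega)\ge -c$, giving $\mathcal{C}_1$ and $\mathcal{C}_2$.

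**Assembling and the main obstacle.** Having established all four conditions over $\LP_p$ with the common constant $c=\max\{r,\text{bandwidth}(A^{-1})\}$, I would extend to $\LP_{p^k}$ and then to $\LP_m$ by the Chinese Remainder / lifting machinery the paper sets up elsewhere: over $\Z/p^k\Z$, surjectivity again yields that $A$ is invertible as a matrix over $\LP_{p^k}$ (one can lift the inverse from $\F_p$ by a Hensel/Newton-type iteration, since the reduction mod $p$ is invertible and $\LP_{p^k}$ is $p$-adically complete in the relevant sense — the bandwidth may grow with $k$ but stays finite and depends only on $A$), and over $\Z/m\Z$ one decomposes via CRT into the prime-power factors and takes $c$ to be the maximum of the finitely many constants obtained. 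The one genuinely delicate point — and the step I expect to be the main obstacle — is pinning down the exact statement "surjective $\iff$ $A$ invertible over $\LP_m^{n\times n}$" and, more importantly, controlling the bandwidth of the inverse uniformly: over a non-field coefficient ring the Smith form argument does not apply directly, so I would either cite the known surjectivity characterization for LCA over $(\Z/m\Z)^n$ (in terms of the $\gcd$ of maximal minors of $A$ being a monomial) or prove the needed inverse-with-bounded-band statement by the explicit adjugate-type formula $A^{-1}=(\det A)^{-1}\operatorname{adj}(A)$, which works verbatim once $\det A$ is a unit monomial and immediately bounds the band of $A^{-1}$ in terms of $n$ and $r$. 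With that in hand, $c$ can in fact be taken to be roughly $nr$, and the lemma follows.
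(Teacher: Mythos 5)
There is a genuine gap, and it sits exactly in the half you flag as the harder one. Your argument for $\mathcal{C}_1$ and $\mathcal{C}_2$ rests on the claim that surjectivity of $\glorule$ is equivalent to the Smith-form diagonal entries of $A$ being units of $\LP_p$, i.e.\ to $\det A$ being a unit monomial, so that $A^{-1}$ exists as a Laurent-polynomial matrix with bounded band. That equivalence is false: over $\Z/p\Z$ surjectivity corresponds to $\det A\neq 0$ in $\LP_p$ (this is precisely how surjectivity is used in the proof of Lemma~\ref{main2}), whereas ``$\det A$ is a unit'' characterizes injectivity/reversibility. Already for $n=1$, $p=2$, $A=X+X^{-1}$ the LCA is surjective but $A$ is not invertible in $\LP_2$; its inverse is a genuine power series with unbounded support, so there is no bandwidth to bound, and the adjugate formula does not help for the same reason. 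A preimage of $\upsilon\in Left(\LS^n_2,0)$ with $\deg^+$ in a bounded window does exist, but it is obtained by solving the convolution equation recursively from the top coefficient downwards, not by applying an inverse matrix. Note also that if your reading were the right one, the lemma would apply only to reversible LCA, which are never positively expansive, so it would be useless for the paper. The mechanism the paper actually invokes (in a single line) is that a surjective LCA is open and hence both left- and right-closing; closingness is exactly the statement that a preimage can be continued from a half-line with only boundedly many choices at each step, and combined with the compactness of $Left(\LS^n_m,0)$ and $Right(\LS^n_m,0)$ (each is a finite union of cylinders intersected with a closed set, hence compact) it delivers the uniform constant $c$ in $\mathcal{C}_1$ and $\mathcal{C}_2$.

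There is a second, smaller error in your treatment of $\mathcal{C}_3$ and $\mathcal{C}_4$: you assert that only the upper bound $\deg^+(A\upsilon)\le c$ is needed, but $A\upsilon\in Left(\LS^n_m,h)$ with $-c\le h\le c$ also requires $\deg^+(A\upsilon)\ge -c$, and this does not follow from the radius alone: one must exclude that the top block of output coefficients cancels over an arbitrarily long stretch. Surjectivity is needed here too. One way to close it: the kernel of a surjective LCA is finite and shift-invariant, hence consists of spatially periodic configurations, so no nonzero kernel element lies in $Left(\LS^n_m,0)$; since the sets $\{\upsilon\in Left(\LS^n_m,0):\deg^+(A\upsilon)\ge -c\}$ are open and exhaust the compact set $Left(\LS^n_m,0)$, a uniform $c$ exists. (Equivalently, this is again the closingness of $\glorule$.) Your CRT/prime-power reduction scaffolding is unobjectionable but moot until the two points above are repaired.
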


\begin{proof}
It is an immediate consequence of the fact that $\glorule$ is open and, hence, it is both left and right closing. 
\end{proof}
We now state the result that is the heart of our work, \ie, providing a decidable characterization of positive expansivity for LCA over $(\Z/p\Z)^n$. Since its proof is very long, we place it in Section~\ref{mainpr}.
\begin{theorem}
\label{risp}
Let $\glorule$ be any LCA over $(\Z/p\Z)^n$ where $p$ and $n$ are any two naturals such that $p$ is prime and $n>1$. The LCA $\glorule$ is positively expansive if and only if the matrix associated with $\glorule$ is expansive.
\end{theorem}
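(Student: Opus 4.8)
The plan is to prove the equivalence through the three-stage reduction announced in the Introduction. Throughout, $\glorule$ acts on $\LS^n_p$ by $\upsilon\mapsto A\upsilon$, the characteristic polynomial $\chi_A\in\LP_p[t]$ (hence the property of being expansive in the sense of Definition~\ref{exppoly}) depends only on the conjugacy class of $A$, and, since $\glorule$ positively expansive forces $\glorule$ surjective and a surjective LCA over $(\Z/p\Z)^n$ has $\det A\neq 0$, we already get $\coeff_0=\chi_A(0)=(-1)^n\det A\neq 0$, the first requirement of Definition~\ref{exppoly}. I would first settle the case in which $A$ is a single companion block, then the case of a block diagonal matrix of companion blocks, then the general case.

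\textbf{Stage 1 (single companion block; the heart).} Assume $A=C^{T}$, where $C$ is the companion matrix of $\pol(t)=\coeff_0+\coeff_1 t+\cdots+\coeff_{n-1}t^{n-1}+t^{n}$. Writing a configuration as $\upsilon^{(0)}=(v^{(0)},\dots,v^{(n-1)})\in\LS^n_p$, the companion shape turns the forward orbit of $\glorule$ into the unrolling of the scalar linear recurrence $v^{(t+n)}=-\sum_{j=0}^{n-1}\coeff_j v^{(t+j)}$ over $\LS_p$, with $\glorule^{t}(\upsilon^{(0)})=(v^{(t)},\dots,v^{(t+n-1)})$ and therefore $\deg^+(\glorule^{t}(\upsilon^{(0)}))=\max_{0\le j<n}\deg^+(v^{(t+j)})$, symmetrically for $\deg^-$. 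If $\pol$ is expansive, I would exploit the strict dominance of $\deg^+(\coeff_0)$ over $\deg^+(\coeff_1),\dots,\deg^+(\coeff_{n-1})$ (and the mirror inequality for $\deg^-$) in a Newton-polygon-style analysis of the recurrence to show that $\deg^+(v^{(t)})$ grows at least linearly with slope $\geq\deg^+(\coeff_0)/n$ and $\deg^-(v^{(t)})$ decreases at least linearly; together with the bounded per-step variation of the degree guaranteed by Lemma~\ref{branch} ($\mathcal{C}_3,\mathcal{C}_4$, applicable because $\coeff_0\neq 0$ makes $\glorule$ surjective), this produces the uniform bound $\hat\ell$ required in Lemma~\ref{expansivity}, so $\glorule$ is positively expansive. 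For the converse I would reason contrapositively. If $\coeff_0=0$ then $\det A=0$ and $\glorule$ is not surjective. If $\coeff_0\neq 0$ but clause $(i)$ of Definition~\ref{exppoly} fails — say $\deg^+(\coeff_0)\le 0$, or $\deg^+(\coeff_0)\le\deg^+(\coeff_i)$ for some $i$, and symmetrically for $(ii)$ — I would construct a nonzero $\upsilon^{(0)}$ whose entire forward orbit lies in $Left^*(\LS^n_p,s)$ for a fixed $s$, so that the first alternative of Lemma~\ref{expansivity} holds and $\glorule$ is not positively expansive: when $\deg^+(\coeff_0)\le 0$ one starts from data supported on $(-\infty,0]$ and checks that the recurrence keeps every $v^{(t)}$ supported there; in the remaining cases one must instead choose the top coefficients of the finitely many initial $v^{(j)}$ so that the maximal-degree contributions cancel at each step of the recurrence. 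Producing such non-spreading configurations for every way in which $\pol$ can fail to be expansive is the technical core of the paper, and it is the step I expect to be the main obstacle.

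\textbf{Stage 2 (block diagonal $A$).} Assume $A=\operatorname{diag}(C_{f_1}^{T},\dots,C_{f_k}^{T})$ with $f_1,\dots,f_k\in\LP_p[t]$ monic and $\chi_A=f_1\cdots f_k$. Then $\glorule$ is the direct product of the $k$ single-block LCA, and a finite product of dynamical systems is positively expansive if and only if each factor is: each factor embeds as a closed invariant subsystem by fixing the remaining coordinates at $0$, and positive expansivity passes to closed invariant subsystems, while conversely positive expansivity of all factors at once gives it for the product. By Stage~1 this happens if and only if every $f_i$ is expansive, and by Lemma~\ref{pura} applied repeatedly to $\chi_A=f_1\cdots f_k$ this is equivalent to $\chi_A$ being expansive.

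\textbf{Stage 3 (arbitrary $A$).} Let $f_1\mid\cdots\mid f_k$ be the invariant factors of $A$ over the fraction field $\mathbb{F}_p(X)$ of $\LP_p$. Each $f_i$ is a monic factor of the monic polynomial $\chi_A\in\LP_p[t]$, hence $f_i\in\LP_p[t]$ by Gauss's lemma, so $A'=\operatorname{diag}(C_{f_1}^{T},\dots,C_{f_k}^{T})$ is the matrix of an LCA over $(\Z/p\Z)^n$ with $\chi_{A'}=\chi_A$, and $A$ is conjugate to $A'$ over $\mathbb{F}_p(X)$. Clearing denominators in such a conjugation gives $\tilde S\in\LP_p^{n\times n}$ with $\det\tilde S\neq 0$ and $A\tilde S=\tilde S A'$; left-multiplying by $\operatorname{adj}(\tilde S)$ and cancelling the nonzero $\det\tilde S$ in the domain $\LP_p$ yields also $\operatorname{adj}(\tilde S)\,A=A'\operatorname{adj}(\tilde S)$. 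Thus the LCA $\glorule_{\tilde S}$ and $\glorule_{\operatorname{adj}(\tilde S)}$ intertwine $\glorule_{A'}$ and $\glorule_A$ in the two directions, and each of these maps has finite kernel, being contained in the set of $\det(\tilde S)$-torsion elements of $\LS^n_p$ — which is finite-dimensional over $\Z/p\Z$, being the solution space of a linear recurrence, hence finite. Positive expansivity then transfers both ways: given distinct configurations, either the relevant intertwiner keeps them distinct, and the separation of their orbits follows from positive expansivity of the target system together with the Lipschitz bound coming from the intertwiner's radius; or else they differ by a nonzero element of the finite kernel and are already separated at time $0$ by the uniform constant $\min\{d(0,\psi):\psi\neq 0\text{ in the kernel}\}$. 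Hence $\glorule_A$ is positively expansive if and only if $\glorule_{A'}$ is, which by Stage~2 holds if and only if $\chi_{A'}=\chi_A$ is expansive, that is, if and only if $A$ is expansive; this completes the proof.
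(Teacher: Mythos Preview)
Your three–stage reduction is exactly the paper's plan, and Stages~2 and~3 are carried out correctly. In Stage~3 you even give a pleasant variant of the paper's Lemma~\ref{main2}: instead of invoking the left/right closing properties of the surjective LCA attached to $Q$ (Lemma~\ref{branch}), you observe that both $\tilde S$ and $\operatorname{adj}(\tilde S)$ are intertwiners with finite kernel (contained in the $\det(\tilde S)$--torsion of $\LS^n_p$, a finite set because a nonzero Laurent polynomial of amplitude $d$ annihilates exactly a $d$--dimensional space of bi-infinite sequences), and transfer positive expansivity through a finite-kernel semiconjugacy. That works.

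The genuine gap is in Stage~1, precisely at the point you flag as ``the main obstacle''. First, your easy subcase is misstated: from $\deg^+(\coeff_0)\le 0$ alone you cannot conclude that data supported on $(-\infty,0]$ stays there under the forward recurrence; you need $\max_{0\le i\le n-1}\deg^+(\coeff_i)\le 0$, since $\deg^+(v^{(n)})\le\max_i\big(\deg^+(\coeff_i)+\deg^+(v^{(i)})\big)$. Second, and more importantly, in the remaining case your plan is to ``choose the top coefficients of the finitely many initial $v^{(j)}$ so that the maximal-degree contributions cancel at each step''. You only control $n$ initial values, after which the recurrence is deterministic; arranging a single cancellation at step $n$ is a linear condition, but propagating cancellations to all future steps is an infinite system, and nothing in your outline explains why it is solvable with nonzero data.

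The paper sidesteps this entirely by running the recurrence \emph{backwards}. Since $\coeff_0\neq 0$, one can solve for $v^{(j-n)}$ in terms of $v^{(j-n+1)},\dots,v^{(j)}$ with coefficients $\varphi_1=1/\coeff_0$, $\varphi_i=-\coeff_{n+1-i}/\coeff_0\in\F_p$; failure of condition~$(i)$ is exactly $\max_i\deg^+(\varphi_i)\ge 0$. Under this hypothesis Lemma~\ref{lx} shows that along the backward sequence there is a ``pick'' (a new running maximum of $\deg^+$) within every window of bounded length, \emph{independently of the initial data}. Starting the backward sequence from $(\coeff_0^{h'},\dots,\coeff_0^{h'})$ for suitably large $h'$ keeps all terms in $\LP_p$ long enough, and taking $\upsilon^{(\ell)}$ at a pick inside a prescribed window yields, for each $\ell$, a configuration with $A^{\ell'}\upsilon^{(\ell)}\in Left^*(\LP_p^n,\deg^+(\upsilon^{(\ell)}))$ for all $\ell'\le\ell$; Lemma~\ref{expansivity} then gives non-expansivity. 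This backward-pick mechanism, not forward cancellation, is what makes Stage~1 go through.
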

The following Lemma extends the decidability result provided by Theorem~\ref{risp} from LCA over $(\Z/p\Z)^n$ to LCA over $(\Z/p^k\Z)^n$.
\begin{lemma}
\label{rispk}
Let $\mathcal{G}$ be any LCA over $(\Z/p^k\Z)^n$ and let  $B\in\LP_{p^k}^{n\times n}$ be the matrix associated with $\mathcal{G}$, where $p, k, n$ are any three naturals such that $p$ is prime, $k>1$, and $n>1$. The LCA $\mathcal{G}$ is positively expansive if and only if the LCA $\glorule$ over $(\Z/p\Z)^n$ having $A$ as associated matrix is too, where $A=(B\mmod p)\in\LP_{p}^{n\times n}$. Equivalently, $\mathcal{G}$ is positively expansive if and only if the matrix $B\mmod p$ is expansive.
\end{lemma}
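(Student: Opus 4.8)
The plan is to use Lemma~\ref{expansivity} to translate positive expansivity into a statement about the action of the associated matrix on Laurent series, and to pass between $\LS_{p^k}$ and $\LS_p$ by combining reduction modulo powers of $p$ with division by the maximal power of $p$ dividing a given series. Observe first that the closing (``equivalently'') sentence is immediate once the main equivalence is proved: the matrix associated with $\glorule$ is exactly $A = B\mmod p$, so by Theorem~\ref{risp} $\glorule$ is positively expansive iff $A = B\mmod p$ is expansive. There is nothing to prove when $k = 1$, so assume $k > 1$; it then remains to show that $\mathcal{G}$ is positively expansive if and only if $\glorule$ is.

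For ``$\mathcal{G}$ positively expansive $\Rightarrow$ $\glorule$ positively expansive'' I would exhibit $\glorule$ as (a system conjugate to) a subsystem of $\mathcal{G}$. The set $S = \bigl(\,p^{k-1}(\Z/p^k\Z)^n\,\bigr)^{\Z}$ is closed (its base is a finite subgroup of $(\Z/p^k\Z)^n$) and $\mathcal{G}$-invariant, and the entrywise multiplication map $c\mapsto p^{k-1}c$ is a homeomorphism from $((\Z/p\Z)^n)^{\Z}$ onto $S$. Since $p^{k-1}x$ in $\Z/p^k\Z$ depends only on $x\mmod p$, one checks that $\mathcal{G}(p^{k-1}c) = p^{k-1}\glorule(c)$ for every $c$, so this homeomorphism conjugates $\glorule$ to $\mathcal{G}|_S$. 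As positive expansivity with a fixed constant $\varepsilon$ restricts to any subsystem and is preserved by conjugacy, $\glorule$ is positively expansive.

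For the converse I would argue the contrapositive via Lemma~\ref{expansivity}. Assume $\mathcal{G}$ is not positively expansive; then there exist $s > 0$ and $\upsilon\in Left(\LS^n_{p^k},0)\setminus\{0\}$ with $B^{\ell}\upsilon\in Left^*(\LS^n_{p^k},s)$ for all $\ell > 0$ (the $Right$ alternative is handled symmetrically, with $\deg^+$ replaced by $\deg^-$). Let $j$ be the largest integer such that $p^j$ divides every coefficient of $\upsilon$; then $0\le j\le k-1$ because $\upsilon\neq 0$, and writing $\upsilon = p^j\upsilon'$ the series $\bar\upsilon' := \upsilon'\mmod p$ is a well-defined nonzero element of $\LS^n_p$. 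Reducing $B^{\ell}\upsilon = p^j B^{\ell}\upsilon'$ modulo $p^{j+1}$ yields $(B^{\ell}\upsilon)\mmod p^{j+1} = p^j\bigl((B^{\ell}\upsilon')\mmod p\bigr) = p^j(A^{\ell}\bar\upsilon')$, and since multiplication by $p^j$ is injective from $\Z/p\Z$ into $\Z/p^{j+1}\Z$, the set of positions where $A^{\ell}\bar\upsilon'$ is nonzero equals that for $(B^{\ell}\upsilon)\mmod p^{j+1}$, which is contained in the corresponding set for $B^{\ell}\upsilon$. Hence $\deg^+(A^{\ell}\bar\upsilon')\le\deg^+(B^{\ell}\upsilon)\le s$ for all $\ell > 0$, and the same comparison applied to $\upsilon$ itself gives $\deg^+(\bar\upsilon')\le\deg^+(\upsilon) = 0$. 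Setting $d_0 = \deg^+(\bar\upsilon')\le 0$ and $\hat\upsilon = X^{-d_0}\bar\upsilon'$, we get $\hat\upsilon\in Left(\LS^n_p,0)\setminus\{0\}$ with $A^{\ell}\hat\upsilon\in Left^*(\LS^n_p,\,s-d_0)$ for all $\ell > 0$ and $s - d_0 > 0$, so Lemma~\ref{expansivity} gives that $\glorule$ is not positively expansive.

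The only genuinely delicate point is the divisibility move in the last paragraph: reducing $\upsilon$ modulo $p$ outright can lose everything, since $\upsilon$ may be divisible by $p$ (up to $p^{k-1}$), and then the reduction is $0$; pulling out the exact power $p^j$ and reducing modulo $p^{j+1}$ instead produces a nonzero element of $\LS^n_p$ while keeping the relevant positive and negative degrees under control. Everything else — the subsystem argument and the appeals to Lemma~\ref{expansivity} and Theorem~\ref{risp} — is routine.
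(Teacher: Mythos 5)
Your proof is correct and follows essentially the same route as the paper's: the forward direction rests on the same multiplication-by-$p^{k-1}$ identity $B^{\ell}(p^{k-1}c)=p^{k-1}A^{\ell}c$ (you package it as a topological conjugacy between $\glorule$ and the closed $\mathcal{G}$-invariant subsystem $\bigl(p^{k-1}(\Z/p^k\Z)^n\bigr)^{\Z}$, whereas the paper runs the contrapositive through Lemma~\ref{expansivity}), and the converse is argued by the same contrapositive, factoring out the maximal power $p^j$ dividing $\upsilon$ and reducing modulo $p$ (equivalently, modulo $p^{j+1}$ after multiplying back by $p^j$). Your explicit shift by $X^{-d_0}$ to land back in $Left(\LS^n_p,0)$ is a small point of tidiness that the paper glosses over.
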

\begin{proof}
We start to prove that if $\mathcal{G}$ is positively expansive then $\glorule$ is too. Let us suppose  that $\glorule$ is not positively expansive and the first condition from Lemma~\ref{expansivity} holds, \ie,  there exists $\upsilon\in Left(\LS^n_p,0)$ with $\upsilon\neq 0$ such that $A^{\ell}\upsilon \in Left^*(\LS^n_p,s)$ for every natural $\ell>0$, where $s>0$ is some integer constant depending on $\glorule$ (the proof is symmetric if one supposes that the second condition holds). Set $\omega=p^{k-1}\upsilon$. Clearly, $\omega\in Left(\LS^n_{p^k},0)$ and $\omega\neq 0$. Since $B$ can be written as $A+pN$ for some matrix $N\in\LP_{p^{k-1}}^{n\times n}$, it holds that  for every natural $\ell>0$
\[
B^{\ell}\omega=(A+pN)^{\ell}p^{k-1}\upsilon=p^{k-1} A^{\ell}\upsilon\enspace,
\]
where all the sums and products of coefficients of the Laurent polynomials/series inside the previous equalities are now meant in $\Z_{p^k}$. Hence, although $A\in\LP_{p}^{n\times n}$ and $\upsilon\in \LS_p^n$, in general $A^{\ell}\upsilon$ now belongs to $\LS^n_{p^k}$ instead of $\LS^n_{p}$ and, as a consequence,  we can not immediately conclude  that $B^{\ell}\omega= p^{k-1} A^{\ell}\upsilon\in Left^*(\LS^n_{p^k},s)$ immediately follows just from the hypothesis as it is, \ie, $A^{\ell}\upsilon \in Left^*(\LS^n_p,s)$ when $A^{\ell}\upsilon$ is considered as an element of $\LS^n_{p}$. However, since $A^{\ell}\upsilon\in\LS^n_{p^k}$ can be written as  $(A^{\ell}\upsilon)\mmod p + p\omega'$ for some $\omega'\in \LS^n_{p^{k-1}}$ and, by hypothesis, $(A^{\ell}\upsilon)\mmod p\in Left^*(\LS^n_p,s)$, we get that $B^{\ell}\omega= p^{k-1} [ (A^{\ell}\upsilon)\mmod p + p\omega']=  p^{k-1} [(A^{\ell}\upsilon)\mmod p]\in Left^*(\LS^n_{p^k},s)$ for every natural $\ell>0$. Therefore, by Lemma~\ref{expansivity}, $\mathcal{G}$ is not positively expansive.

%Sia $B$ una  matrice con elementi in $\LL_{p^k}$.
%Allora $B$ la possiamo scrivere come $A+N$ dove 
%$A$ \`e una matrice con elementi in $\LL_{p}$ e
% $N=p N'$ \`e una matrice con elementi in $\LL_{p^k}$ multipli di $p$.
%Sia $v$ un vettore con elementi in $\LL_{p}$.
%Supponiamo che $v$ non espanda bilateralmente sotto iterazioni di $A$.
%Sia $w=p^{k-1} v$ un vettore con elementi in  $\LL_{p^k}$.
%Ovviamente se $v$ \`e diverso da zero anche $w$ sarà diverso da zero.
%Dimostriamo che $w$ non espande bilateralmente sotto iterazioni di $B$.

We now prove that if $\glorule$  is positively expansive then $\mathcal{G}$ is too. Let us suppose  that $\mathcal{G}$ is not positively expansive and the first condition from Lemma~\ref{expansivity} holds, \ie,  there exists $\omega\in Left(\LS^n_{p^k},0)$ with $\omega\neq 0$ such that $B^{\ell}\omega \in Left^*(\LS^n_{p^k},s)$ for every natural $\ell>0$, where $s>0$ is some integer constant depending on $\mathcal{G}$ (again, the proof is symmetric if one supposes that the second condition holds). We deal with the following two mutually exclusive cases. 
\begin{itemize}
\item[]
If there is no $\omega'$ such that $\omega=p\omega'$, then set $\upsilon=\omega\mmod p$. Clearly, $\upsilon\neq 0$ and  $\upsilon \in Left(\LS^n_{p},h)$ for some integer $h\leq 0$. Moreover, it holds that $A^{\ell}\upsilon = (B\mmod p)^{\ell} (\omega \mmod p) = (B^{\ell}\omega)\mmod p \in Left^*(\LS^n_{p},s)$ for every natural $\ell>0$. 
\item[]
Otherwise, let $j\in\{1,\ldots, k-1\}$ and $\omega'\in\LS^n_{p^{k}}$ be such that $\omega=p^j\omega'$, where $j$ is the largest natural such that all the coefficients of the $n$ Laurent series forming $\omega$ are multiple of $p^j$.  Set $\upsilon=\omega'\mmod p$. Clearly, $\upsilon\neq 0$ and  $\upsilon \in Left(\LS^n_{p},h)$ for some integer $h\leq 0$. Since  $ p^j B^{\ell}\omega'=B^{\ell}\omega\in Left^*(\LS^n_{p^k},s)$, either $B^{\ell}\omega'\in Left^*(\LS^n_{p^k},s)$ or $B^{\ell}\omega'\notin Left^*(\LS^n_{p^k},s)$ happens, but, in both situations it must hold that $(B^{\ell}\omega') \mmod p \in Left^*(\LS^n_{p},s)$. Therefore, it follows that $A^{\ell}\upsilon = (B\mmod p)^{\ell} (\omega' \mmod p) = (B^{\ell}\omega')\mmod p \in Left^*(\LS^n_{p},s)$ for every natural $\ell>0$.
%Since
%$B^i w= p^j B^i v$ we deduce that if $B^i v \not\in Left^*(\LS_{p^k},h)$ this is due to some 
%quantities that vanishes when multiplied by $p^j $.  Since $j<k$ these quantities must be multiples %of $p$.
\end{itemize}
In both cases, the first condition of Lemma~\ref{expansivity} is satisfied as far as $\glorule$ is concerned.  Thus, $\glorule$ is not positively expansive and this concludes the proof that $\mathcal{G}$ is positively expansive if and only if $\glorule$ is positively expansive. By Theorem~\ref{risp}, it follows that $\mathcal{G}$ is positively expansive if and only if the matrix $B\mmod p$ is expansive.
\end{proof}
At this point, we are able to extend the decidability result regarding positive expansivity to the whole class of LCA over $(\Z/m\Z)^n$ where $m$ is any natural with $m>1$. 
\begin{corollary}
\label{rism}
Positive expansivity is decidable for Linear CA over $(\Z/m\Z)^n$.
\end{corollary}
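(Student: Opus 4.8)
The plan is to reduce positive expansivity of an LCA over $(\Z/m\Z)^n$ to the prime-power case already settled by Lemma~\ref{rispk}, using the Chinese Remainder Theorem. Write $m=p_1^{k_1}\cdots p_r^{k_r}$ as a product of prime powers. The ring isomorphism $\Z/m\Z\cong \prod_{i=1}^r \Z/p_i^{k_i}\Z$ induces an isomorphism of $\mathbb{K}$-modules $(\Z/m\Z)^n\cong \prod_i (\Z/p_i^{k_i}\Z)^n$, hence a topological conjugacy between the configuration space $((\Z/m\Z)^n)^\Z$ and the product $\prod_i ((\Z/p_i^{k_i}\Z)^n)^\Z$. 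Under this conjugacy the global rule $\glorule$ (defined by the matrix $A\in\LP_m^{n\times n}$) decomposes as the product $\glorule_1\times\cdots\times\glorule_r$, where $\glorule_i$ is the LCA over $(\Z/p_i^{k_i}\Z)^n$ whose associated matrix is $A \mmod p_i^{k_i}$, i.e. $A$ with its coefficients reduced modulo $p_i^{k_i}$; this is exactly the kind of CRT splitting already invoked for Additive CA in the preliminaries.

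Next I would record the elementary topological fact that a finite product of dynamical systems on compact metric spaces is positively expansive if and only if every factor is. The forward direction is clear since each projection is a quasi-isometric factor map; the converse follows because, given expansivity constants $\varepsilon_i$ for the factors, the product is expansive with constant $\min_i \varepsilon_i$ (up to the fixed bilipschitz constants relating the product metric to the metrics on the individual factors, which are harmless). Combined with the previous paragraph, this yields: $\glorule$ is positively expansive if and only if each $\glorule_i$ over $(\Z/p_i^{k_i}\Z)^n$ is positively expansive.

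Then I would invoke the machinery already in place. If $n>1$ and $k_i>1$, Lemma~\ref{rispk} tells us that $\glorule_i$ is positively expansive iff the matrix $(A \mmod p_i^{k_i}) \mmod p_i = A \mmod p_i$ is expansive (Definition~\ref{exppoly}), which is a finite, purely combinatorial check on the positive and negative degrees of the coefficients of the characteristic polynomial. If $k_i=1$ this is Theorem~\ref{risp} directly. The degenerate case $n=1$ is classical and decidable by the known characterizations for one-component LCA cited in the introduction, so it can be handled separately or absorbed. Since there are finitely many primes $p_i$, finitely many reductions to perform, and each expansivity test on a matrix over a finite ring $\LP_{p_i}^{n\times n}$ is effective (the characteristic polynomial is computable and its coefficients have finitely many monomials), the whole procedure terminates. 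Hence positive expansivity of $\glorule$ is decidable.

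The only real obstacle is making the CRT decomposition of the \emph{global rule} rigorous at the level of bi-infinite configurations and Laurent-polynomial matrices, namely checking that reduction mod $p_i^{k_i}$ commutes with the matrix action on formal power series and that the Tychonoff metric on the product matches (up to constants) the product of the Tychonoff metrics on the factors; both are routine but must be stated. Everything else is an immediate assembly of Theorem~\ref{risp}, Lemma~\ref{rispk}, and the product lemma for positive expansivity.
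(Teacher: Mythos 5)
Your proposal is correct and follows essentially the same route as the paper: factor $m$ into prime powers, split $\glorule$ via the CRT into the product of the LCA $\glorule_i$ over $(\Z/p_i^{k_i}\Z)^n$ with matrices $A \mmod p_i^{k_i}$, use the fact that a finite product is positively expansive iff each factor is, and then apply Lemma~\ref{rispk} and Theorem~\ref{risp} to each factor. The paper simply asserts the product equivalence without the justification you spell out, so your extra care (and your separate treatment of $n=1$) only makes the argument more complete.
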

\begin{proof}
Consider an arbitrary LCA $\glorule$ over $(\Z/m\Z)^n$ and let $A\in {\LP_m}^{n\times n}$ be the matrix associated with $\glorule$. Let  $m=p_1^{k_1}\cdots p_l^{k_l}$ be the prime factor decomposition of $m$ and, for each $i\in\{1, \ldots, l\}$, let $\glorule_i$ be the LCA over $(\Z/(p_i)^{k_i}\Z)^n$ having $(A \mmod (p_i)^{k_i}) \in {\LP^{n\times n}_{(p_i)^{k_i}}}$ as associated matrix. Since $\glorule$ is positively expansive if and only if every LCA $\glorule_i$ is too, by Lemma~\ref{rispk} and Theorem~\ref{risp}, it follows that $\glorule$ is positively expansive if and only if every matrix $(A \mmod p_i) \in {\LP}_{p_i}^{n\times n}$ is expansive, the latter being a decidable property. Therefore, the statement is true.
%positive expansivity is decidable for Linear CA over $(\Z/m\Z)^n$.
\end{proof}
Finally, we prove the decidability result regarding positive expansivity for the whole class Additive CA  over a finite abelian group. 
\begin{corollary}
Positive expansivity is decidable for Additive CA over a finite abelian group.
\end{corollary}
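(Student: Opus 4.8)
The plan is to reduce the additive case to the linear case already handled in Corollary~\ref{rism}, by exploiting the embedding $\Emb$ and the commuting diagram relating an Additive CA $(G^\Z,F)$ to its associated LCA $(\hat G^\Z,L)$ over $(\Z/p^{k_1}\Z)^n$. First I would recall that, by the fundamental theorem of finite abelian groups and the splitting discussed in Section~\ref{sec:basic}, any finite abelian group $G$ decomposes as a product $G_{1}\times\cdots\times G_{t}$ of its $p$-primary components with pairwise coprime orders, and $F$ correspondingly splits as a direct sum $F_1\times\cdots\times F_t$ of Additive CA over the $G_j$. Positive expansivity of $F$ is equivalent to positive expansivity of every $F_j$ (a product of systems is positively expansive iff each factor is, with the expansivity constant the minimum of the constants). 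Hence it suffices to decide positive expansivity for an Additive CA over a group of the form $G=\zetapku\times\cdots\times\zetapkn$ with $k_1\geq\cdots\geq k_n$, which is exactly the normal form assumed in the excerpt.

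Next I would use the embedding $\Emb:G^\Z\to\hat G^\Z$ and the identity $L\circ\Emb=\Emb\circ F$ to transfer positive expansivity. Since $\Emb$ is continuous and injective, and $G^\Z$ is compact, $\Emb$ is a homeomorphism onto its image, so $(G^\Z,F)$ is topologically conjugate to the subsystem $(\Emb(G^\Z),L|_{\Emb(G^\Z)})$ of $(\hat G^\Z,L)$. Therefore $F$ is positively expansive if and only if $L$ restricted to the closed invariant set $\Emb(G^\Z)$ is positively expansive. The key claim to establish is: $L|_{\Emb(G^\Z)}$ is positively expansive if and only if $L$ itself is positively expansive on all of $\hat G^\Z$. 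One direction is immediate (restriction of a positively expansive map to a closed invariant subset is positively expansive). For the converse I would argue that the algebraic obstruction to positive expansivity of $L$ detected by Lemma~\ref{expansivity} — a nonzero formal power series $\upsilon$ with bounded iterates under $A$ (or its transpose) — can always be chosen inside the image of $\Emb$: concretely, if such an obstructing $\upsilon\in\LS_{p^{k_1}}^n$ exists, then $\emb$ being multiplication of the $i$-th component by $p^{k_1-k_i}$ means $p^{k_1-k_1}\upsilon^1,\ldots,p^{k_1-k_n}\upsilon^n$ — possibly after multiplying $\upsilon$ by a suitable power of $p$ to land in the image coordinatewise — still gives a nonzero bounded-iterate series lying in $\Emb(\LS_{p^{k_1}}$-configurations arising from $G^\Z$), because multiplication by powers of $p$ commutes with the matrix action $A^\ell$ and does not increase positive/negative degree. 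This yields the equivalence, and by Lemma~\ref{expansivity} (applied to the LCA $L$ over $(\Z/p^{k_1}\Z)^n$) together with Lemma~\ref{rispk} and Theorem~\ref{risp}, positive expansivity of $F$ reduces to the decidable condition that $A\mmod p$ is expansive.

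Putting the pieces together: to decide positive expansivity of an arbitrary Additive CA over a finite abelian group $G$, one computes the primary decomposition of $G$, splits $F$ into $p$-primary components $F_j$ over groups in the normal form $\zetapku\times\cdots\times\zetapkn$, for each $j$ builds the associated matrix $A^{(j)}\in\LPKuno^{n\times n}$ from the endomorphisms $\locrule_z$ as in the construction recalled in Section~\ref{sec:basic}, reduces it modulo $p$, and checks whether $A^{(j)}\mmod p$ is expansive in the sense of Definition~\ref{exppoly} — a purely syntactic check on the coefficients of the characteristic polynomial, hence an effective algorithm. The whole procedure terminates, so positive expansivity is decidable for Additive CA over a finite abelian group.

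I expect the main obstacle to be the converse direction of the subsystem equivalence, namely showing that the non-positive-expansivity of $L|_{\Emb(G^\Z)}$ follows from that of $L$ on the full space $\hat G^\Z$ — equivalently, that an obstructing power series witnessing failure of positive expansivity for $L$ can be realized inside the image of the embedding. The excerpt explicitly flags that being a subsystem is "not enough in general to lift dynamical properties," so this is precisely the delicate point; the argument must use the specific algebraic form of $\emb$ (componentwise multiplication by powers of $p$) and the compatibility of this scaling with the matrix action and with the degree functions $\deg^\pm$, in the same spirit as the case analysis carried out in the proof of Lemma~\ref{rispk}.
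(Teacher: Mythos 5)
Your proposal follows essentially the same route as the paper: split $G$ into its $p$-primary components, pass to the associated LCA $(\hat G^{\Z},\mathcal{L})$ via the embedding $\Emb$, and transfer positive expansivity in both directions, with the delicate converse handled by scaling a witness of non-expansivity by a suitable power of $p$ so that it lands in $\Emb(G^{\Z})$ without vanishing. The one detail to make explicit is the choice of that power: as in the paper, one takes the largest $min$ with $p^{min}\cdot(b-b')\neq 0$ (so that $p^{min+1}\cdot(b-b')=0$), which forces the scaled difference to be divisible by $p^{k_1-1}$, hence componentwise by every $p^{k_1-k_i}$, placing it in the image while keeping it nonzero.
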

\begin{proof}
Let $\glorule: G^\Z\to G^\Z$ be any Additive CA over the finite abelian group. Without loss of generality, we can assume that $G=\zetapku\times\ldots\times \zetapkn$ for some prime $p$ and some non zero naturals $k_1, \ldots, k_n$ with $k_1\geq k_2\geq \ldots\geq k_n$. Let $\mathcal{L}$ be the LCA over $\hat{G}$ associated with $\glorule$ via the embedding $\Emb$, where $\hat{G}=(\zetapku)^n$. We are going to show that $\glorule$ is positively expansive if and only if $\mathcal{L}$ is too. By Corollary~\ref{rism}, this is enough to conclude that positive expansivity is decidable for Additive CA over a finite abelian group.

We start to prove that if $\mathcal{L}$ is positively expansive then $\glorule$ is too. Let us suppose that $\glorule$ is not positively expansive. Choose an arbitrary $\varepsilon>0$. So, there exist $c, c'\in G^\Z$ with $c\neq c'$ such that $d({\glorule}^\ell(c), {\glorule}^\ell(c'))\leq \varepsilon$ for every natural $\ell$. Consider the two configurations $\Emb(c), \Emb(c')\in \hat{G}^\Z$. Clearly, $\Emb(c)\neq \Emb(c')$. Moreover, for every natural $\ell$ it holds that $d({\mathcal{L}}^\ell(\Emb(c)), {\mathcal{L}}^\ell(\Emb(c')))=d(\Emb({\glorule}^\ell(c)), \Emb({\glorule}^\ell(c')))=d({\glorule}^\ell(c), {\glorule}^\ell(c'))\leq\varepsilon$. Hence, $\mathcal{L}$ is not positively expansive.

We now prove that if $\glorule$ is positively expansive then $\mathcal{L}$  is too. Let us suppose that $\mathcal{L}$ is not positively expansive. Choose an arbitrary $\varepsilon>0$. So, there exist $b, b'\in \hat{G}^\Z$ with $b\neq b'$ such that $d({\mathcal{L}}^\ell(b), {\mathcal{L}}^\ell(b'))\leq \varepsilon$ for every natural $\ell$. Let $min$ be the minimum natural such that $p^{min} \cdot (b-b')\neq 0$ and $p^{min+1}\cdot (b-b')=0$. We get that $p^{min} \cdot b, p^{min} \cdot b'\in \Emb(G^\Z)$ and $p^{min} \cdot b\neq p^{min} \cdot b'$. Let $c,c'\in G^\Z$ be the two configurations such that $\Emb(c)=p^{min} \cdot b$ and $\Emb(c')=p^{min} \cdot b'$. Clearly, $c\neq c'$. 
For every natural $\ell$ it holds that $d({\glorule}^\ell(c), {\glorule}^\ell(c')) =d(\Emb({\glorule}^\ell(c)), \Emb({\glorule}^\ell(c')))=d({\mathcal{L}}^\ell(\Emb(c)), {\mathcal{L}}^\ell(\Emb(c')))=d({\mathcal{L}}^\ell(b), {\mathcal{L}}^\ell(b'))\leq\varepsilon$. Hence $\glorule$ is not positively expansive.
\end{proof}
%%%%%%%%%%%%%%%%%%%%%%%%%%%
%%%%%%%%%%%%%%%%%%%%%%%%%%%
\section{Proof of Theorem~\ref{risp}}
\label{mainpr}
We now recall some useful notions and known fact from abstract algebra. In the sequel, the standard acronyms  PID and UFD stand for principal ideal domain and unique factorization domain, respectively. 
\smallskip

Let $\pid$ be PID and let $A\in \pid^{n\times n}$. The \emph{elementary divisors}, or \emph{invariants}, or \emph{invariant factors} associated with $A$ are the elements $a_1,\dots, a_n\in\pid$ defined as follows:  
$\forall i\in \{1,\dots,n \}, a_i = \Delta_i(A)/\Delta_{i-1}(A)$,  where $\Delta_i(A)$ is the greatest common divisor of the $i$-minors of $A$ and $\Delta_0(A) = 1$.
%The elements $a _{i}$ are unique up to multiplication by a unit.
\smallskip

The companion matrix of a monic polynomial $\pol(t)=\coeff_0 +  \ldots +\coeff_{n-1} t^{n-1}+ t^n$ is 

$$C_{\pi}=
\begin{pmatrix}
0&0&0&-\coeff_0\\
1&\cdots&0&-\coeff_1\\
\vdots&\ddots&\vdots&\vdots\\
0&\cdots&1&-\coeff_{n-1}
\end{pmatrix}\enspace.
$$
A matrix $C$ is in rational canonical form if it is block diagonal
$$C=
\begin{pmatrix}
 C_{\pi_1}  & \mathbb{O} & \cdots & \mathbb{O} \\
  \mathbb{O} &C_{\pi_2}  &\cdots  &  \mathbb{O} \\
  \vdots &\vdots  &\ddots  & \vdots \\
 \mathbb{O}  & \mathbb{O} & \cdots &C_{\pi_s}  \\
\end{pmatrix}\enspace,
$$
where each $C_{\pi_i}$ is the companion matrix of some monic polynomial $\pi_i$ of non null degree and $\pi_i$ divides $\pi_j$ for $i \leq j$. It is well-known that if $A\in \field^{n\times n}$ is any matrix  where $\field$ is a field all the following facts hold:
\begin{enumerate}
\item[-] $A$ is similar to a  unique matrix in rational canonical form and this latter is called the rational canonical form of $A$;
\item[-] the monic polynomials $\pi_1, \ldots, \pi_s$ defining the blocks of the rational canonical form of $A$ are the nonconstant invariant factors of $tI-A$;
\item[-] $\chi_A(t)=\prod_{i=1}^{s} \pi_i(t)$, where $\pi_s$ is the minimal polynomial of $A$;
\item[-] there exist
$v_1,\dots, v_s\in \field^n$ such that 
$$\{v_1,Av_1,\dots,A^{d_1-1}v_1,     v_2,Av_2,\dots,A^{d_2-1}v_2,\dots, v_s,Av_s,\dots,A^{d_s-1}v_s\}$$
is a basis of $ \field^n$ with respect to which $A$ becomes in rational canonical form $C$, \ie, $A=P^{-1} C P$, where $d_i=\deg(\pi_i)$ and $P$ is the matrix having the elements of that basis as columns.
\end{enumerate}
We now report the following known result which will be very useful in the sequel.
\begin{lemma}[Proposition 1 in https://people.math.binghamton.edu/mazur/teach/gausslemma.pdf]
\label{lemmafigo}
Let $U$ be a UFD and let $\field_U$ be the field of fractions of $U$. For any monic polynomials $\pi \in U[t]$ and $\rho, \tau\in \field_U[t]$, it holds that if $\pi = \rho\cdot \tau$ then $\rho,\tau \in U[t]$.
\end{lemma}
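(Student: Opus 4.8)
The plan is to derive the statement from Gauss's Lemma on contents of polynomials over a UFD. For a nonzero $f\in U[t]$ write $c(f)\in U$ for its \emph{content}, the gcd of its coefficients, which is well defined up to a unit of $U$; Gauss's Lemma states that $c(fg)=c(f)c(g)$ up to units, for all nonzero $f,g\in U[t]$, and in particular a product of primitive polynomials is primitive.

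First I would clear denominators. Since $\rho,\tau\in\field_U[t]$, choose $d_\rho,d_\tau\in U\setminus\{0\}$ with $\tilde\rho:=d_\rho\rho\in U[t]$ and $\tilde\tau:=d_\tau\tau\in U[t]$. Because $\rho$ and $\tau$ are monic, the leading coefficients of $\tilde\rho$ and $\tilde\tau$ are exactly $d_\rho$ and $d_\tau$, so in particular $c(\tilde\rho)$ divides $d_\rho$ and $c(\tilde\tau)$ divides $d_\tau$. Multiplying the identity $\pi=\rho\tau$ by $d_\rho d_\tau$ yields $d_\rho d_\tau\,\pi=\tilde\rho\,\tilde\tau$ in $U[t]$.

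Next I would compare the contents of the two sides. As $\pi$ is monic its content is a unit, so $c(d_\rho d_\tau\,\pi)=d_\rho d_\tau$ up to a unit, whereas $c(\tilde\rho\,\tilde\tau)=c(\tilde\rho)\,c(\tilde\tau)$ up to a unit by Gauss's Lemma; hence $d_\rho d_\tau=c(\tilde\rho)\,c(\tilde\tau)$ up to a unit of $U$. Combining this with $c(\tilde\rho)\mid d_\rho$ and $c(\tilde\tau)\mid d_\tau$, cancellation in the domain $U$ forces $c(\tilde\rho)$ and $d_\rho$ to be associates, and likewise $c(\tilde\tau)$ and $d_\tau$. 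In particular $d_\rho$ divides every coefficient of $\tilde\rho$, so $\rho=\tilde\rho/d_\rho\in U[t]$, and symmetrically $\tau=\tilde\tau/d_\tau\in U[t]$, which is the assertion.

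I do not anticipate a genuine obstacle here: the only delicate point is the ``up to units'' bookkeeping in the content identities together with the cancellation step that passes from $d_\rho d_\tau=c(\tilde\rho)c(\tilde\tau)$ and the divisibilities $c(\tilde\rho)\mid d_\rho$, $c(\tilde\tau)\mid d_\tau$ to the two associateness claims. As an alternative route one could argue via integral closure: $U$ is integrally closed in $\field_U$, the roots of $\pi$ in an algebraic closure of $\field_U$ are integral over $U$, and the coefficients of $\rho$ and $\tau$, being up to sign elementary symmetric functions of some of those roots, are integral over $U$ while lying in $\field_U$, hence they lie in $U$.
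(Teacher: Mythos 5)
Your argument is correct: the content bookkeeping is sound (the leading coefficients of $d_\rho\rho$ and $d_\tau\tau$ are $d_\rho$ and $d_\tau$, Gauss's lemma gives $d_\rho d_\tau = c(\tilde\rho)c(\tilde\tau)$ up to units since $\pi$ is primitive, and cancellation in the domain then makes $d_\rho$, $c(\tilde\rho)$ associates), and the integral-closure alternative is equally valid. The paper does not prove this lemma at all but merely cites it as Proposition 1 of an external note on Gauss's lemma, so your content-based proof is precisely the standard argument being invoked there.
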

%\begin{proof}
%Let $k, t \in K$ be non-zero elements such that $kg$, $th$ are primitive. Since $g, h$ are monic, both $k$ and $t$
 %are in $R$. By Gauss Lemma, the polynomial $(kg)(th) = (kt)f$ is primitive so $kt$, being a divisor 
 %of all coefficients of $(kt)f$, is a unit in $R$. Thus both $k$ and $t$ are invertible in $R$
 % and therefore both $g$ and $h$ are in $R[t]$.
 % \end{proof}
The following is an important consequence of Lemma~\ref{lemmafigo}. 
\begin{lemma}\label{kf}
Let $U$ be a UFD and let $\field_U$ be the field of fractions of $U$. Let $A=U^{n\times n}$ and let $\pi_1,\ldots, \pi_s\in \field_U[t]$ be the invariant factors of $tI-A$ when $A$ is considered as an 
element of $\field_U^{n\times n}$. Then, for every $i\in\{1, \ldots, s\}$ it holds that $\pi_i \in U[t]$ .
\end{lemma}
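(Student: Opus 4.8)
The plan is to derive Lemma~\ref{kf} directly from Lemma~\ref{lemmafigo} by an easy induction on the number of invariant factors. The starting observation is that, since $A$ has its entries in $U$, the characteristic polynomial $\chi_A(t)=\det(tI_n-A)$ is a \emph{monic} polynomial with coefficients in $U$, while the facts recalled just before Lemma~\ref{lemmafigo} tell us that each invariant factor $\pi_i$ is monic and that $\chi_A=\prod_{i=1}^s\pi_i$. Hence $\chi_A$ is a monic polynomial of $U[t]$ that factors in $\field_U[t]$ as a product of $s$ monic polynomials, and the whole content of the lemma is to push this factorization down to $U[t]$.

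First I would record the auxiliary statement I actually want to induct on: if $g\in U[t]$ is monic and $g=q_1\cdots q_r$ with each $q_j\in\field_U[t]$ monic, then every $q_j$ lies in $U[t]$. The base case $r=1$ is immediate. For the inductive step I would split the product as $g=q_1\cdot(q_2\cdots q_r)$; since a product of monic polynomials is again monic, both $q_1$ and $q_2\cdots q_r$ are monic elements of $\field_U[t]$, so Lemma~\ref{lemmafigo} applies with $\pi=g$, $\rho=q_1$, $\tau=q_2\cdots q_r$ and yields $q_1\in U[t]$ together with $q_2\cdots q_r\in U[t]$. The latter is now a monic polynomial of $U[t]$ written as a product of the $r-1$ monic polynomials $q_2,\dots,q_r$, so the induction hypothesis finishes the job. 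Applying this with $g=\chi_A$, $r=s$ and $q_i=\pi_i$ gives the claim.

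There is no real obstacle here: all the mathematical substance is already contained in Lemma~\ref{lemmafigo}, the Gauss-lemma statement for monic factorizations over a UFD. The only point that needs a moment's care is that Lemma~\ref{lemmafigo} is phrased for a \emph{binary} product $\pi=\rho\cdot\tau$, so it has to be iterated, and when iterating one must keep checking that the grouped factors remain monic — which is exactly why the auxiliary statement is phrased in terms of monic factors rather than arbitrary ones. I would therefore set up the induction with monicity built into the hypothesis from the outset, and with $\chi_A\in U[t]$ being monic as the input that starts the iteration.
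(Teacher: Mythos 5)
Your proof is correct and follows essentially the same route as the paper, which also observes that $\chi_A=\prod_{i=1}^{s}\pi_i$ is monic in $U[t]$ and invokes ``a repeated application'' of Lemma~\ref{lemmafigo}. Your version merely makes explicit the induction that the paper leaves implicit, including the (correct) check that the grouped factors remain monic at each step.
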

\begin{proof}
Since $\pi_1,\ldots, \pi_s, \chi_A$ are all monic and $\chi_A(t)=\prod_{i=1}^{s} \pi_i(t)\in U[t]$, by a repeated application of Lemma~\ref{lemmafigo}, we get that every $\pi_i\in U[t]$.
\end{proof}
We now deal with the algebraic structures of our interest, namely, the PID $\LP_p$ and the UFD $\LP_p[t]$. Clearly, $\LP_p$ is also an UFD, but, since $\LP_p$ is not a field, $\LP_p[t]$ is not a PID, while $\F_p[t]$ is, where $\F_p$ is the field of fraction of  $\LP_p$. Therefore, we can not refer to invariant factors when  the involved set is $\LP_p[t]$, while we can as far as $\F_p[t]$ is concerned.
%
%Let $\F_p$ be the field of fractions of $\LL_p$ with $p$ prime. Let $V=\F_p^n$. \\
%$V$ is a finite dimensional vector space over  $\F_p$.\\
%Let $A\in \F_p^{n\times n}$. $A: V\to V$ is a linear transformation from $V$ to $V$. \\
%Let $(A,V)$ be the  $\F[t]$-module over $V$ with the action of $t$ on $v\in V$ equal to   $Av$.\\
%Note that $\F_p[t]$ is a PID (since $\F_p$ is a field) while $\LL_p[t]$ is not. $\LL[t]$ is a UFD. 
\begin{lemma}\label{main1}
For any matrix $A\in \LP_p^{n\times n}$ with $\det(A)\neq 0$ there exist two matrices $Q,C\in \LP_p^{n\times n}$ such that $\det(Q)\neq 0$, $C$ is in rational canonical form, $QA=CQ$, and $\chi_A=\chi_C$.
%
%\begin{equation}\label{maineq}
%QA= S Q
%\end{equation}
\end{lemma}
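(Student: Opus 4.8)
The plan is to realize the rational canonical form over the field of fractions $\F_p$ of the PID $\LP_p$, and then argue that one can clear denominators so as to keep everything inside $\LP_p$. First I would regard $A$ as a matrix in $\F_p^{n\times n}$; since $\det(A)\neq 0$ in $\LP_p$, it is also nonzero in $\F_p$. By the structure theory quoted above, there is an invertible $\tilde P\in \F_p^{n\times n}$ and a matrix $\tilde C\in\F_p^{n\times n}$ in rational canonical form with $\tilde P A \tilde P^{-1}=\tilde C$, equivalently $\tilde P A=\tilde C\tilde P$. The blocks of $\tilde C$ are the companion matrices of the nonconstant invariant factors $\pi_1,\dots,\pi_s\in\F_p[t]$ of $tI-A$; by Lemma~\ref{kf} each $\pi_i$ actually lies in $\LP_p[t]$, hence $\tilde C\in\LP_p^{n\times n}$ already. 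Also $\chi_A=\prod_i \pi_i=\chi_{\tilde C}$, so the characteristic-polynomial assertion will be automatic once we have produced the similarity; the only remaining work is to replace $\tilde P$ by a matrix $Q$ over $\LP_p$ with $\det(Q)\neq 0$ and $QA=CQ$ for a (possibly re-chosen) rational canonical form $C$ over $\LP_p$.

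To obtain such a $Q$, I would use the explicit basis description: there are vectors $v_1,\dots,v_s\in\F_p^n$ such that the cyclic strings $v_i,Av_i,\dots,A^{d_i-1}v_i$ (with $d_i=\deg\pi_i$) together form an $\F_p$-basis of $\F_p^n$, and $\tilde P$ is the inverse of the matrix whose columns are these basis vectors. Since $\LP_p$ is a UFD with fraction field $\F_p$, each $v_i$ can be rescaled by a suitable nonzero element of $\LP_p$ (clearing the denominators of its entries) so that $v_i\in\LP_p^n$; scaling $v_i$ scales the whole $i$-th string and does not change the fact that the strings form an $\F_p$-basis, nor does it change the companion blocks, so we may assume all $v_i\in\LP_p^n$. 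Let $P$ be the $\LP_p$-matrix with these vectors as columns and put $Q=\operatorname{adj}(P)=\det(P)\,P^{-1}\in\LP_p^{n\times n}$; then $\det(Q)=\det(P)^{n-1}\neq 0$. From $P^{-1}AP=C$ (the rational canonical form with blocks $C_{\pi_1},\dots,C_{\pi_s}$, which lies in $\LP_p^{n\times n}$ since the $\pi_i\in\LP_p[t]$) we get $AP=PC$, hence $\operatorname{adj}(P)\,A\,P=\operatorname{adj}(P)\,P\,C=\det(P)\,C$, and multiplying on the right by $\operatorname{adj}(P)$ again, or simply rearranging $QA P=\det(P)C$ and $QP = \det(P)I$, yields $QA=\det(P)\,C\,P^{-1}=C\,\det(P)P^{-1}=CQ$, using that a block-diagonal $C$ over a commutative ring commutes with the scalar $\det(P)$ — more carefully, from $QP=\det(P)I$ one has $P^{-1}=\det(P)^{-1}Q$ over $\F_p$, so $QA=QPC P^{-1}=\det(P)C P^{-1}=\det(P)\det(P)^{-1}CQ\cdot\det(P)\det(P)^{-1}=CQ$; all entries of $QA$ and $CQ$ lie in $\LP_p$, so the identity $QA=CQ$ holds in $\LP_p^{n\times n}$. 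Finally $\chi_C=\prod_i\chi_{C_{\pi_i}}=\prod_i\pi_i=\chi_A$.

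The main obstacle is the passage from the field $\F_p$ to the ring $\LP_p$: over a field one has the clean rational canonical form and the cyclic basis, but $\LP_p[t]$ is not a PID, so there is no module-theoretic normal form available directly over $\LP_p$. The two ingredients that make the descent work are (i) Lemma~\ref{kf}, which guarantees the invariant factors themselves already have coefficients in $\LP_p$, so the target matrix $C$ is defined over $\LP_p$ with no loss; and (ii) the observation that one does not need an \emph{invertible} change of basis over $\LP_p$ — it suffices to conjugate "up to a nonzero scalar," i.e. to use $Q=\operatorname{adj}(P)$ in place of $P^{-1}$, which trades invertibility of $Q$ for the weaker but sufficient condition $\det(Q)\neq 0$ while keeping all entries integral. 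Care must be taken only to check that rescaling the generators $v_i$ individually does not disturb the companion-block structure (it does not, since a companion block is determined by the minimal polynomial of the cyclic generator, which is scaling-invariant) and that the resulting strings still span $\F_p^n$ over $\F_p$ (they do, since each is just a nonzero scalar multiple of a basis vector).
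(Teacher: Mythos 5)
Your proposal is correct and follows essentially the same route as the paper: pass to the fraction field $\F_p$, invoke Lemma~\ref{kf} to see that the invariant factors (hence $C$) already lie over $\LP_p$, and clear denominators in the cyclic-basis matrix to obtain a conjugating matrix over $\LP_p$ with nonzero determinant. The only cosmetic difference is that you take $Q=\operatorname{adj}(P)$, whereas the paper multiplies $P$ by a single common denominator $\alpha\in\LP_p$ and sets $Q=\alpha P$; both yield $\det(Q)\neq 0$ and $QA=CQ$.
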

\begin{proof} 
Choose arbitrarily a matrix $A\in \LP_p^{n\times n}$. Cleary, it holds that $A\in \F_p^{n\times n}$, where $\F_p$ is the  field of fractions of $\LP_p$. Hence,  there exist $v_1,\dots, v_s\in \field^n$ such that, $A=P^{-1} C P$, $C\in \field_p^{n\times n}$ is the matrix in canonical rational form, the blocks of which are defined by the invariant factors $\pi_1, \ldots, \pi_s\in \F_p[t]$ of $tI-A$,  and $P$ is the matrix having the elements of the basis $\{v_1,Av_1,\dots,A^{d_1-1}v_1,     v_2,Av_2,\dots,A^{d_2-1}v_2,\dots, v_s,Av_s,\dots,A^{d_s-1}v_s\}$ of $ \field_p^n$ as columns, where $d_i=\deg(\pi_i)$. Clearly, $\chi_A=\chi_C$.
%$A$ can be written as 
%\begin{equation}\label{maineq2}
%A=P^{-1} R P
%\end{equation}
%where\\
%$-$ $R\in \F_p^{n\times n}$ is in rational canonical form\\
%$-$ $P=(v_1,Av_1,\dots,A^{\delta_1-1}v_1,     v_2,Av_2,\dots,A^{\delta_2-1}v_2,\dots, v_s,Av_s,\dots,A^{\delta_s-1}v_s)\in \F_p^{n\times n}$ \\
%$-$ $a_1(t),\dots,a_s(t)$ are the invariant factors of $tI-A$ ($a_i(t)\in \F_p[t]$)\\
%$-$ $\delta_j$ are the degree of $a_j(t)$

Let $\alpha\in \LP_p$ be such that $v_i'=\alpha v_i\in \LP_p$ for each $i\in \{1,\dots,s \}$. Set $Q=\alpha P$. It is clear that $Q\in \LP_p^{n\times n}$, as desired. Furthermore, by Lemma~\ref{kf}, it follows that $C\in \LP_p^{n\times n}$, too. Moreover, we get that $A=P^{-1} C P=\alpha \alpha^{-1}P^{-1} C P= \alpha^{-1}P^{-1} C \alpha P=Q^{-1}CQ$. Therefore, $QA=CQ$ and this concludes the proof. 
% (todo: prove that such an $\alpha$ exists different from 0).
\end{proof}

\begin{lemma}\label{main2}
For any $A,B,Q\in \LP_p^{n\times n}$ such that $\det(Q)\neq 0$ and $AQ=QB$ it holds that the LCA $\glorule$ over $(\Z/p\Z)^n$ having $A$ as associated matrix is positively expansive if and only if the LCA $\mathcal{G}$ over $(\Z/p\Z)^n$ having $B$ as associated matrix is, too.
%\begin{equation}\label{comm1}
%MT= TR
%\end{equation}
\end{lemma}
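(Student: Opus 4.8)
The plan is to establish the single implication $(\star)$: for any $A,Q,B\in\LP_p^{n\times n}$ with $\det(Q)\neq0$ and $AQ=QB$, if the LCA with associated matrix $B$ is not positively expansive then the LCA with associated matrix $A$ is not positively expansive either. This is the contrapositive of ``$\glorule$ positively expansive $\Rightarrow$ $\mathcal{G}$ positively expansive'', so it yields one of the two implications of the Lemma, and the other one will follow from $(\star)$ by symmetry. Indeed, let $R=\mathrm{adj}(Q)\in\LP_p^{n\times n}$ be the adjugate of $Q$, so that $QR=RQ=\det(Q)I$. Working in the field of fractions $\F_p$ of $\LP_p$, in which $Q$ is invertible with $Q^{-1}=\det(Q)^{-1}R$, the hypothesis $AQ=QB$ gives $B=Q^{-1}AQ$, hence $BR=\det(Q)\,Q^{-1}A=RA$, an identity that actually takes place over $\LP_p^{n\times n}$; moreover $\det(R)=\det(Q)^{n-1}\neq0$. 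Applying $(\star)$ to $(B,R,A)$ in the roles of $(A,Q,B)$ then gives ``$\glorule$ not positively expansive $\Rightarrow$ $\mathcal{G}$ not positively expansive'', \ie the remaining implication. So it suffices to prove $(\star)$.

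To prove $(\star)$, suppose the LCA $\mathcal{G}$ with matrix $B$ is not positively expansive. By Lemma~\ref{expansivity} there are an integer $s>0$ and a nonzero $\upsilon\in Left(\LS^n_p,0)$ with $B^{\ell}\upsilon\in Left^*(\LS^n_p,s)$ for every $\ell>0$; the other alternative offered by Lemma~\ref{expansivity}, phrased through $Right$ and $\deg^-$, is handled identically with all inequalities reversed, as indicated at the end. Fix $r\in\N$ with $Q=\sum_{i=-r}^{r}Q_iX^{-i}$, so that $\deg^+(Q\eta)\le\deg^+(\eta)+r$ for every $\eta\in\LS^n_p$. Set $\omega_0=Q\upsilon$; iterating $AQ=QB$ gives $A^{\ell}Q=QB^{\ell}$, hence $A^{\ell}\omega_0=QB^{\ell}\upsilon$ and $\deg^+(A^{\ell}\omega_0)\le\deg^+(B^{\ell}\upsilon)+r\le s+r$ for every $\ell>0$. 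Put $d=\deg^+(\omega_0)$ and $\omega=X^{-d}\omega_0$; since $X^{-d}$ is central, $A^{\ell}\omega=X^{-d}A^{\ell}\omega_0$, so that $\deg^+(\omega)=0$ and $\deg^+(A^{\ell}\omega)\le s+r-d$ for every $\ell>0$. As $d\le\deg^+(\upsilon)+r=r$, the integer $\hat s:=s+r-d$ satisfies $\hat s\ge s>0$, and therefore $\omega\in Left(\LS^n_p,0)\setminus\{0\}$, together with the constant $\hat s$, witnesses through Lemma~\ref{expansivity} that $\glorule$ is not positively expansive.

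The one step that is not routine degree bookkeeping, and which I expect to be the main obstacle, is verifying $\omega_0=Q\upsilon\neq0$ (needed so that $d$ is a genuine integer and $\omega$ a legitimate witness). Multiplication by $Q$ is \emph{not} injective on all of $\LS^n_p$: a nonzero spatially periodic configuration can be sent to $0$ by a Laurent polynomial matrix of nonzero determinant (over $\Z/p\Z$, the factor $1+X$ already annihilates the all-ones configuration). What makes the argument work is that the witnesses produced by Lemma~\ref{expansivity} are one-sidedly bounded: since $\upsilon\neq0$ has $\deg^+(\upsilon)=0$ finite, it has a well-defined top coefficient, and from $R\omega_0=\det(Q)\,\upsilon$ together with the fact that multiplication by the nonzero Laurent polynomial $\det(Q)$ over the field $\Z/p\Z$ cannot cancel that coefficient, we conclude $\det(Q)\,\upsilon\neq0$ and hence $\omega_0\neq0$. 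For the $Right/\deg^-$ alternative one argues in the same way, using $\deg^-(Q\eta)\ge\deg^-(\eta)-r$, the well-defined bottom coefficient of a witness with $\deg^-(\upsilon)=0$, and shifting by $X^{-d}$ with $d=\deg^-(\omega_0)$.
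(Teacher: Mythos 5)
Your proof is correct, but it takes a genuinely different route from the paper's. The paper proves both implications directly and leans on Lemma~\ref{branch}: for the direction you call $(\star)$ it invokes condition $\mathcal{C}_3$ (multiplication by $Q$ maps $Left(\LS^n_p,0)$ into some $Left(\LS^n_p,h)$ with $-c\le h\le c$, which in particular packages the nonvanishing of $Q\upsilon$), and for the converse it invokes $\mathcal{C}_1$ to produce a degree-bounded preimage $\omega$ with $Q\omega=\upsilon$; since Lemma~\ref{branch} is only available for surjective LCA, the paper must first dispose of the case $\det(A)=\det(B)=0$. You instead prove only $(\star)$ and deduce the converse from the adjugate symmetry $RA=BR$ with $R=\mathrm{adj}(Q)$ and $\det(R)=\det(Q)^{n-1}\neq 0$ — a clean observation that eliminates the need for preimages, for Lemma~\ref{branch} altogether (your elementary bound $\deg^+(Q\eta)\le\deg^+(\eta)+r$ together with the shift by $X^{-d}$ does the same bookkeeping), and for the determinant case split. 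What the paper's route buys is uniformity with machinery it uses elsewhere; what yours buys is self-containedness and, notably, an honest justification of the one delicate point, namely $Q\upsilon\neq 0$. You are right that $\det(Q)\neq 0$ alone does not make multiplication by $Q$ injective on all of $\LS^n_p$ (a bi-infinite periodic series can be annihilated by a nonzero Laurent polynomial), and your argument via $R(Q\upsilon)=\det(Q)\upsilon$ together with the surviving top coefficient of the one-sidedly supported witness over the field $\Z/p\Z$ is exactly what is needed; the paper's proof asserts ``$\omega\neq 0$ since $\det(Q)\neq 0$'' and implicitly delegates this to $\mathcal{C}_3$, i.e., to openness and closingness of the surjective LCA defined by $Q$.
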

\begin{proof}
First of all, it is easy to see that $A^{\ell}Q=QB^{\ell}$ for every natural $\ell>0$. Moreover, $\det(A)=0$ if and only if $\det(B)=0$. So, the thesis turns out to be trivially true if $\det(A)=\det(B)=0$. Therefore, in the sequel of the proof, we will assume that $\det(A)\neq 0$ and $\det(B)\neq 0$, \ie, both $\glorule$ and $\mathcal{G}$ are surjective.

We now start to prove that if $\glorule$ is positively expansive then $\mathcal{G}$ is too. Suppose that $\mathcal{G}$ is not positively expansive and the first condition from Lemma~\ref{expansivity} holds, \ie,  there exists $\upsilon\in Left(\LS^n_{p},0)$ with $\upsilon\neq 0$ such that $B^{\ell}\upsilon\in Left^*(\LS^n_{p},s)$ for every natural $\ell>0$, where $s>0$ is some integer constant depending on $\mathcal{G}$ (the proof is symmetric if one supposes that the second condition holds). Since $Q$ is the matrix associated with a surjective LCA over $(\Z/p\Z)^n$ condition $\mathcal{C}_3$ of  Lemma~\ref{branch} is satisfied as far as $Q$ is concerned. 
%%%
%%%COMMENTO IMPORTANTE
%\textbf{attenzione che in $\mathcal{C}_3$ non c'e' $*$}. 
%
Hence, setting $\omega=Q\upsilon$, for some natural constant $c>0$ depending on $Q$ and some integer $h$ with $-c\leq h\leq c$, it holds that  $A^{\ell}\omega=QB^{\ell}\upsilon\in Left^*(\LS^n_{p},h)$ for every natural $\ell>0$. Clearly, 
%%% COMMENTO IMP
%\textbf{la non nullita' e' chiara? $det(Q)\neq 0$...} 
%% Q e' surjective quindi se w fosse 0 tutti gli shift di v finirebbero in 0 quindi infinite preimmagini
%%%
$\omega\neq 0$ since $det(Q)\neq 0$. In addition, it holds that  $\omega\in Left(\LS^n_{p},h')$ for some integer $h'$. Thus, it follows that $\glorule$ is not positively expansive.

We now prove that if $\mathcal{G}$  is positively expansive then $\glorule$ is too. Assume that $\glorule$ is not positively expansive and the first condition from Lemma~\ref{expansivity} holds, \ie,  there exists $\upsilon\in Left(\LS^n_{p},0)$ with $\upsilon\neq 0$ such that $A^{\ell}\upsilon\in Left^*(\LS^n_{p},s)$ for every natural $\ell>0$, where $s>0$ is some integer constant depending on $\mathcal{G}$ (again, the proof is symmetric if one supposes that the second condition holds). %Since $\glorule$ is surjective, condition $\mathcal{C}_1$ of  Lemma~\ref{branch} is satisfied as far as $A$ is concerned. 
Since $Q$ is the matrix associated with a surjective LCA over $(\Z/p\Z)^n$, condition $\mathcal{C}_1$ of  Lemma~\ref{branch} is satisfied as far as $Q$ is concerned. 
Thus, for some natural constant $c>0$ depending on $Q$ and some integer $h$ with $-c\leq h\leq c$, there exists $\omega\in Left(\LS^n_{p},h)$ such that $Q\omega=\upsilon$. Clearly, $\omega\neq 0$ since $\upsilon\neq 0$. Furthermore, $Q B^{\ell}\omega=A^{\ell}Q\omega=A^{\ell}\upsilon\in Left^*(\LS^n_{p},s)$ for every natural $\ell>0$. Therefore, 
%%%%%%%%
%%%%%%%%
%%%% ALTRO COMM IMPO
%\textbf{da confermare il passaggio DA $Q$ per qualcosa appartenente a $Left*$ A qualcosa appartenente a $Left*$} 
there exists an integer constant $s'>s>0$ such that $B^{\ell}\omega\in Left^*(\LS^n_{p},s')$ for every natural $\ell>0$. So, by Lemma~\ref{expansivity}$, \mathcal{G}$ is not positively expansive and this concludes the proof. 
%
%
%
%%We use the statement 3 in Definition \ref{expansivity}\\
%Assume now that $R$ expands. We want to prove that for every $v\in Right(\s_p,0)$, $M^iv$ expands towards left.
%From Lemma \ref{branch} we have that a configuration $w$ such that $Tw=v$ and $degmin(w)> -k$.
%Since $R$ is expansive then R expands towards left on $w$ and then as before also $TR^iw$ expands towards left.
%Then from Equation \ref{comm3} we have that also $M^iTw$ expands towards left and then also $M^iv$ expands towards left. The same argument can be used for expansivity towards right.
\end{proof}
%%%%%%%%%
In the sequel, with an abuse of notation, $\deg^+(\varphi)$ stands for $\deg^+(\alpha)-\deg^+(\beta)$ for any fraction $\varphi=\alpha/\beta\in \F_p$ with $\alpha,\beta\in\LP_p$ and $\alpha,\beta\neq 0$, where $\F_p$ is the field of fractions of $\LP_p$. 
\begin{lemma}\label{lx}
Let $\upsilon_1,\dots,\upsilon_n$ be arbitrary elements of $\LP_p$, 
%and 
%$d\in\Z$ be arbitrary elements with $deg^+(\upsilon_1)=\cdots =deg^+(\upsilon_n)=d$, 
where $p$ and $n$ are any two naturals such that $p$ is prime and $n>1$, and let  $\varphi_1,\dots,\varphi_n$ be arbitrary elements of $\F_p$ such that %$\deg^+$ is well defined over all of them and 
$\max \{ \deg^+(\varphi_1),\ldots,\deg^+(\varphi_n)\}\geq 0$. Let $min$ be the minimum index such that 
 $\deg^+(\varphi_{min})= \max \{ \deg^+(\varphi_1),\ldots,\deg^+(\varphi_n)\}$. 
%$deg^+(\varphi_i)\geq 0$ for at least one $i$. 
Regarding the sequence 
\[
\{\upsilon_1, \ldots, \upsilon_n, \ldots,  \upsilon_{j}, \ldots \}\subset\F_p\enspace,
\]
where, for each $j>n$, 
\[
\upsilon_{j}= \varphi_n\upsilon_{j-1}  
  + \ldots +
  \varphi_1\upsilon_{j-n}\enspace.
 \]
call pick any natural $J>0$ such that $deg^+(\upsilon_j)\leq deg^+(\upsilon_{J})$ for every natural $j$ with $0<j<J$. It holds that 
%$1, \ldots, n$ are picks and 
for any pick $J$ there exists a pick %$J'$ with $J<J'\leq J+n$.
$J'\in\{J+1, \ldots, J+n-min+1\}$. In particular, for any pick the number of positions within which there is a further pick does not depend on the values of the initial elements $\upsilon_1, \ldots, \upsilon_n$ of the sequence.
%%%% IMPORT
%%%%\textbf{i gradi sono tutti ben definiti?}
%(Fact (2) can be rephrased as follows: all the degrees of the elements of the sequence are either equal to $d$ or they are smaller than $d$, 
%but no more than $n$ consecutive elements can have degrees smaller than $d$)
%In other words, there are no $n$ consecutive elements of $S$ with degree less then $d$.
  \end{lemma}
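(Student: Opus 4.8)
The plan is to expose, inside the recurrence that defines $\upsilon_{J+n-min+1}$, one summand that strictly dominates all the others in positive degree, namely $\varphi_{min}\upsilon_J$; its domination will force $J+n-min+1$ (or some earlier index of the window) to be a pick. Throughout I use the two elementary properties of $\deg^+$ on $\F_p$ that already underlie the preceding results: $\deg^+(xy)=\deg^+(x)+\deg^+(y)$, and $\deg^+(x+y)=\max\{\deg^+(x),\deg^+(y)\}$ whenever the two positive degrees differ (both follow from the corresponding facts over $\LP_p$ by clearing denominators). The degenerate case $\upsilon_J=0$ is immediate: since $J$ is a pick, $\deg^+(\upsilon_j)\le\deg^+(\upsilon_J)=-\infty$ for every $0<j<J$, so $\upsilon_1=\cdots=\upsilon_J=0$ and $J+1$ is a pick; hence we may assume $\deg^+(\upsilon_J)\in\Z$, while $\deg^+(\varphi_{min})=\max\{\deg^+(\varphi_1),\dots,\deg^+(\varphi_n)\}\ge 0$ by hypothesis.

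Arguing by contradiction, assume no element of the window $W=\{J+1,\dots,J+n-min+1\}$ is a pick. A short induction along $W$ then gives, for every $j\in\{J\}\cup W$, that the running maximum $\max\{\deg^+(\upsilon_i):0<i\le j\}$ still equals $\deg^+(\upsilon_J)$, and in particular $\deg^+(\upsilon_j)<\deg^+(\upsilon_J)$ for every $j\in W$: if some $j\in W$ had $\deg^+(\upsilon_j)\ge\deg^+(\upsilon_J)$, then by the inductive hypothesis it would weakly dominate every earlier term and hence be a pick. Now expand the last element of $W$ via its recurrence, $\upsilon_{J+n-min+1}=\sum_{k=1}^{n}\varphi_k\,\upsilon_{J+k-min}$, and compare each summand with $\varphi_{min}\upsilon_J$, whose positive degree is $\deg^+(\varphi_{min})+\deg^+(\upsilon_J)$. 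For $k<min$ the index $J+k-min$ is smaller than $J$, so the pick property at $J$ gives $\deg^+(\upsilon_{J+k-min})\le\deg^+(\upsilon_J)$, while minimality of $min$ among the indices realizing the maximum gives the strict inequality $\deg^+(\varphi_k)<\deg^+(\varphi_{min})$; for $k>min$ the index $J+k-min$ lies in $W$ but is strictly below $J+n-min+1$, so $\deg^+(\upsilon_{J+k-min})<\deg^+(\upsilon_J)$ by the induction, while $\deg^+(\varphi_k)\le\deg^+(\varphi_{min})$. In either case the summand has positive degree strictly less than $\deg^+(\varphi_{min})+\deg^+(\upsilon_J)$, so by the valuation property $\deg^+(\upsilon_{J+n-min+1})=\deg^+(\varphi_{min})+\deg^+(\upsilon_J)\ge\deg^+(\upsilon_J)$, contradicting the inductive conclusion that $\deg^+(\upsilon_{J+n-min+1})<\deg^+(\upsilon_J)$. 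Hence $W$ contains a pick; and since $n-min+1$ depends only on $n$ and $min$, hence only on $\varphi_1,\dots,\varphi_n$, the gap bound is independent of $\upsilon_1,\dots,\upsilon_n$, as claimed.

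The step I expect to be the real obstacle is the control of the ``future'' summands $\varphi_k\upsilon_{J+k-min}$ with $k>min$: their indices sit strictly between $J$ and $J+n-min+1$, so the pick property at $J$ says nothing about them, and it is exactly the contradiction hypothesis together with the running-maximum induction that pins their positive degrees strictly below $\deg^+(\upsilon_J)$; getting the quantifier order and the induction right there is the crux. A minor point to handle separately is that the recurrence is only available for indices exceeding $n$, so the computation above presupposes $J+n-min+1>n$; when instead the whole window $W$ sits inside the initial segment $\{2,\dots,n\}$, one argues directly from the hypotheses on $\upsilon_1,\dots,\upsilon_n$ (in the situation in which the lemma is applied the initial terms share a common positive degree, which already makes $1,\dots,n$ picks and renders the window condition trivial).
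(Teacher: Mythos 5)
Your proof is correct and follows essentially the same route as the paper's: both isolate the summand $\varphi_{min}\upsilon_J$ in the recurrence for $\upsilon_{J+n-min+1}$ and show it strictly dominates the others (using minimality of $min$ for the indices below $J$ and the absence of picks in the window for the indices above $J$), the only difference being that you phrase this as a contradiction while the paper runs a direct two-case analysis. Your flagging of the boundary case where the window sits inside the initial segment $\{1,\dots,n\}$ is a legitimate observation the paper glosses over, and your remark that it is harmless in the intended application is accurate.
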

 \begin{proof}
Clearly, the set of picks is non empty since $1,\ldots, n$ are picks.  
Let $J$ be any pick. If there exists $j\in\{J+1, \ldots, J+n-min\}$ such that $deg^+(\upsilon_{j})\geq deg^+(\upsilon_{J})$, necessarily there must be a further pick inside the integer interval $\{J+1, \ldots, J+n-min\}$. Otherwise, it holds that  $deg^+(\upsilon_{j})< deg^+(\upsilon_{J})$ for every $j\in\{J+1, \ldots, J+n-min\}$ and, since
\[
 \deg^+(\upsilon_{J+n-min+1})=\deg^+(\upsilon_{J}\varphi_{min})= \deg^+(\upsilon_{J})+\deg^+(\varphi_{min})\geq \deg^+(\upsilon_{J})\enspace,
 \]
 it follows that the natural $J+n-min+1$ turns out to be pick. 
\end{proof} 
The following result is the heart of our work. It provides a decidable characterization of positively expansive LCA over $(\Z/p\Z)^n$ with  associated matrix such that its transpose is in a rational canonical form consisting of only one block.
\begin{lemma}\label{exr}
Let $A\in\LP_p^{n\times n}$ be the matrix such that its transpose $A^T$ is the companion matrix of any monic polynomial % $\coeff_0 +  \ldots +\coeff_{n-1} t^{n-1}+ t^n$ \\ 
$-\coeff_0 +  \ldots - \coeff_{n-1} t^{n-1}+ t^n$
from $\LP_p[t]$, \ie,
\begin{equation}
\label{riga}
A=
\begin{pmatrix}
0&1&0&\cdots&0\\
0&0&1&\ddots&0\\
\vdots&\ddots&\ddots&\ddots&0\\
0&0&\cdots&0&1\\
\coeff_0&\coeff_1&\cdots&\coeff_{n-2}&\coeff_{n-1}
\end{pmatrix}\enspace,
\end{equation}
where $p$ and $n$ are any two naturals such that $p$ is prime and $n>1$, and let $\glorule$ be the LCA over $(\Z/p\Z)^n$ having $A$ as associated matrix. 
The LCA $\glorule$ is positively expansive if and only if $A$ is expansive if and only if $A^T$ is expansive if and only if $\coeff_0 +  \ldots +\coeff_{n-1} t^{n-1}+ t^n$ is expansive. 
\begin{comment}
$$M=
\begin{pmatrix}
0&1&\cdots&0\\
0&\vdots&\ddots&\vdots\\
0&0&\cdots&1\\
\alpha_{1}&\alpha_{2}&\cdots&\alpha_{n}
\end{pmatrix}
$$
\end{comment}
\end{lemma}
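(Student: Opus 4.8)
The plan is to translate positive expansivity, via Lemma~\ref{expansivity}, into the growth behaviour of the one-sided sequence produced by the linear recurrence attached to $A$, using the explicit shape of $A$ in~\eqref{riga}. The three algebraic equivalences on the right are cheap: a matrix and its transpose share the same characteristic polynomial, so $\chi_A=\chi_{A^T}$, and since $A^T$ is the companion matrix of $-\coeff_0+\cdots-\coeff_{n-1}t^{n-1}+t^n$ we get $\chi_A(t)=t^n-\coeff_{n-1}t^{n-1}-\cdots-\coeff_0$; expansivity of a monic polynomial (Definition~\ref{exppoly}) depends only on the values $\deg^{\pm}$ of its coefficients, which are insensitive to sign, so ``$A$ expansive'' $\iff$ ``$A^T$ expansive'' $\iff$ ``$\coeff_0+\cdots+\coeff_{n-1}t^{n-1}+t^n$ expansive''. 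It thus remains to prove: $\glorule$ is positively expansive iff $\pi(t):=\coeff_0+\cdots+\coeff_{n-1}t^{n-1}+t^n$ is expansive. Throughout I use the immediate computation from~\eqref{riga}: if $\upsilon=(\upsilon_1,\dots,\upsilon_n)^T\in\LS^n_p$ is extended to a sequence $(\upsilon_j)_{j\ge1}$ in $\LS_p$ by $\upsilon_j=\coeff_0\upsilon_{j-n}+\coeff_1\upsilon_{j-n+1}+\cdots+\coeff_{n-1}\upsilon_{j-1}$ for $j>n$, then $A^{\ell}\upsilon=(\upsilon_{\ell+1},\dots,\upsilon_{\ell+n})^T$ for all $\ell\ge0$. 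Finally, the mirror reflection $c_i\mapsto c_{-i}$ conjugates $\glorule$ to the LCA whose matrix is obtained from $A$ by $X\mapsto X^{-1}$, again of the form~\eqref{riga}, and this interchanges conditions $(i)$ and $(ii)$ of Definition~\ref{exppoly} and interchanges $\deg^+$ with $-\deg^-$; hence in each of the two implications I may restrict to the $\deg^+/(i)$ side, the other side following by applying the argument to the mirror LCA.

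For ``$\pi$ expansive $\Rightarrow$ $\glorule$ positively expansive'' it suffices, by Lemma~\ref{expansivity}, to rule out the existence of $\upsilon\in Left(\LS^n_p,0)\setminus\{0\}$ with $\deg^+(A^{\ell}\upsilon)\le s$ for all $\ell>0$. Assuming such a $\upsilon$ exists, set $a_j:=\deg^+(\upsilon_j)$; then $a_j\le s':=\max(s,0)$ for every $j$ and $\max(a_1,\dots,a_n)=0$. Apply Lemma~\ref{lx} with $\varphi_i:=\coeff_{i-1}$: expansivity gives $\deg^+(\coeff_0)>0$ and $\deg^+(\coeff_0)>\deg^+(\coeff_i)$ for $i\ge1$, so the hypothesis $\max_i\deg^+(\varphi_i)\ge0$ holds and $min=1$; hence successive picks occur at distance at most $n$, so there are infinitely many picks, and the running maximum $\max(a_1,\dots,a_m)$ is non-decreasing, integer valued and bounded by $s'$, hence eventually equal to some $M^{*}\in[0,s']$, with $a_j\le M^{*}$ for all $j$ and $a_J=M^{*}$ at every pick $J$ beyond the stabilisation index. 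Fix such a pick $J$. In $\upsilon_{J+n}=\coeff_0\upsilon_J+\coeff_1\upsilon_{J+1}+\cdots+\coeff_{n-1}\upsilon_{J+n-1}$ the summand $\coeff_0\upsilon_J$ has $\deg^+$ exactly $\deg^+(\coeff_0)+M^{*}$ (the leading coefficients multiply to a nonzero element of the field $\Z/p\Z$, and $\coeff_0\ne0$, $\upsilon_J\ne0$), while every other summand has $\deg^+\le(\deg^+(\coeff_0)-1)+M^{*}$; therefore $a_{J+n}=\deg^+(\upsilon_{J+n})=\deg^+(\coeff_0)+M^{*}>M^{*}\ge a_{J+n}$, a contradiction. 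Lemma~\ref{lx} is stated over $\LP_p$, but its proof uses only the additivity $\deg^+(\varphi\,\upsilon)=\deg^+(\varphi)+\deg^+(\upsilon)$ for $\varphi\in\LP_p$ and $\upsilon\in\LS_p$ with finite $\deg^+$, which is exactly what is used here.

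For ``$\pi$ not expansive $\Rightarrow$ $\glorule$ not positively expansive'' we may assume (mirror) that condition $(i)$ fails. If moreover $\coeff_0=0$, then $A e_1=\coeff_0 e_n=0$, so $A^{\ell}e_1=0$ for $\ell\ge1$ and $\upsilon=e_1\in Left(\LS^n_p,0)\setminus\{0\}$ witnesses the first condition of Lemma~\ref{expansivity}. Assume $\coeff_0\ne0$. Work inside the completion $\Z/p\Z((X^{-1}))\subseteq\LS_p$ of $\F_p$ at the place at infinity, with valuation $v_\infty=-\deg^+$, and consider $\chi_A(t)=t^n-\coeff_{n-1}t^{n-1}-\cdots-\coeff_0$. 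A direct check shows that the failure of $(i)$ is exactly the statement that the leftmost segment of the Newton polygon of $\chi_A$ at $v_\infty$ has slope $\le0$; by the slope (Newton polygon) factorisation over this complete field, $\chi_A=g\cdot h$ with $g,h$ monic over $\Z/p\Z((X^{-1}))$, $\deg g=:\ell_1\ge1$, and all roots of $g$ of $v_\infty\ge0$, i.e. of $\deg^+\le0$; hence every coefficient $\gamma_i$ of $g(t)=t^{\ell_1}+\gamma_{\ell_1-1}t^{\ell_1-1}+\cdots+\gamma_0$, being an elementary symmetric function of such roots, has $\deg^+(\gamma_i)\le0$. Let $(\upsilon_j)_{j\ge1}$ be the solution of the order-$\ell_1$ recurrence with characteristic polynomial $g$ and initial data $\upsilon_1=1$, $\upsilon_2=\cdots=\upsilon_{\ell_1}=0$; since $g\mid\chi_A$ this also solves the $\chi_A$-recurrence, all $\upsilon_j$ lie in $\Z/p\Z((X^{-1}))\subseteq\LS_p$, and from $\deg^+(\gamma_i\,\upsilon_\bullet)\le\deg^+(\upsilon_\bullet)$ one gets $\deg^+(\upsilon_j)\le\max(\deg^+(\upsilon_1),\dots,\deg^+(\upsilon_{\ell_1}))=0$ for every $j$. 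Thus $\upsilon:=(\upsilon_1,\dots,\upsilon_n)^T\in Left(\LS^n_p,0)\setminus\{0\}$ with $\deg^+(A^{\ell}\upsilon)\le0$ for all $\ell>0$, so by Lemma~\ref{expansivity} $\glorule$ is not positively expansive.

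I expect the second implication to be the real obstacle: it requires Newton polygon / slope factorisation over the completed field $\Z/p\Z((X^{-1}))$, the precise bookkeeping identity ``$(i)$ fails $\iff$ the leftmost Newton slope is $\le0$'', and care that the solution built from $g$ genuinely lives in $\LS^n_p$ (which is why one passes to the completion rather than staying in $\F_p$) and keeps its orbit $\deg^+$-bounded. The first implication is comparatively routine once Lemmas~\ref{expansivity} and~\ref{lx} are in place, the only delicate point being the extension of Lemma~\ref{lx} from the $\LP_p$-valued to the $\LS_p$-valued setting.
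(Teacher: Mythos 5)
Your proposal is correct, and the two easy chains of equivalences plus the direction ``$\pi$ expansive $\Rightarrow$ $\glorule$ positively expansive'' follow essentially the paper's own route: identify the recurrence $\upsilon_{j+n}=\coeff_0\upsilon_j+\cdots+\coeff_{n-1}\upsilon_{j+n-1}$ hidden in $A^{\ell}\upsilon$, locate an index where the (bounded, integer-valued) supremum of $\deg^+(\upsilon_j)$ is attained, and let the dominance of $\deg^+(\coeff_0)$ force the degree strictly up, a contradiction; your detour through Lemma~\ref{lx} and a ``stabilised pick'' is harmless but not needed there (the paper simply takes the first index attaining the supremum), and your remark that Lemma~\ref{lx} transfers from $\LP_p$ to $\LS_p$ because only additivity of $\deg^+$ is used is accurate. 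Where you genuinely diverge is the converse. The paper inverts the recurrence (coefficients $-\coeff_i/\coeff_0$, $1/\coeff_0$ in $\F_p$), applies Lemma~\ref{lx} to guarantee picks at bounded distances independent of the initial data, seeds the sequence with $(\coeff_0^{h'},\dots,\coeff_0^{h'})$ to clear denominators, and for every $\ell$ exhibits a Laurent-polynomial vector whose degree fails to grow for $\ell$ steps, concluding by the uniform-$\hat{\ell}$ half of Lemma~\ref{expansivity}. You instead pass to the completion $\Z/p\Z((X^{-1}))$ of $\F_p$ at $v_\infty=-\deg^+$, observe (correctly) that failure of condition $(i)$ is exactly ``leftmost Newton slope $\leq 0$'', extract by slope factorisation a monic factor $g\mid\chi_A$ all of whose coefficients have $\deg^+\leq 0$, and run the $g$-recurrence to produce a single series vector in $Left(\LS^n_p,0)\setminus\{0\}$ with $\deg^+$-bounded forward orbit, witnessing the first condition of Lemma~\ref{expansivity} directly. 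Both arguments are sound; yours buys a shorter, more conceptual proof with an explicit infinite non-expansivity witness and no picks combinatorics, at the price of importing Hensel/Newton-polygon theory over a complete valued field, which the paper deliberately avoids in favour of elementary degree bookkeeping inside $\LP_p$ and $\F_p$.
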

%%%
\begin{proof}
It is clear that the matrix $A$ is expansive if and only if its transpose $A^T$ is expansive if and only if $\chi_A(t)=-\coeff_0 +  \ldots -\coeff_{n-1} t^{n-1}+ t^n$ is expansive if and only if $\coeff_0 +  \ldots +\coeff_{n-1} t^{n-1}+ t^n$ is expansive. Since $\glorule$ is surjective if and only if $-\coeff_0\neq 0$, the thesis turns out to be trivially true if $\coeff_0=0$.  Hence, in the sequel of the proof we can assume that $\coeff_0\neq 0$.

We start to prove that if $A$ is expansive then $\glorule$ is positively expansive. For a sake of argument, suppose that  $A$ is expansive but $\glorule$ is not positively expansive and, in particular, the first condition from Lemma~\ref{expansivity} holds, \ie,  there exists $\upsilon=(\upsilon_1,\ldots, \upsilon_n)\in Left(\LS^n_{p},0)$ with $\upsilon\neq 0$ such that $A^{\ell}\upsilon\in Left^*(\LS^n_{p},s)$ for every natural $\ell>0$, where $s>0$ is some integer constant depending on $\glorule$ (the proof is symmetric if one supposes that the second condition holds). Consider the infinite sequence 
\[
\{\upsilon_1, \ldots, \upsilon_n, \upsilon_{1+n}, \ldots,  \upsilon_{\ell+n}, \ldots \}\enspace,
\]
where $\upsilon_{\ell+n}= \coeff_0 \upsilon_{\ell} +  \ldots +\coeff_{n-1} \upsilon_{\ell+n-1}$ for each natural $\ell>0$. Clearly, it holds that $A^{\ell}\upsilon=(\upsilon_{\ell+1}, \ldots, \upsilon_{\ell+n})$ for each $\ell>0$. The first condition from Lemma~\ref{expansivity} ensures that there exists an integer $s'\leq s$ such that $\deg^+(\upsilon_j)\leq s'$ for every natural $j>0$ and $\deg^+(\upsilon_{j})=s'$ for at least one $j>0$. Let $min>0$ be the minimum index such that $\deg^+(\upsilon_{min})=s'$.
Since
$$\upsilon_{min+n}= \coeff_0 \upsilon_{min} +  \ldots +\coeff_{n-1} \upsilon_{min+n-1}\enspace,$$
$A$ is expansive, and $p$ is prime, we get $\deg^+(\upsilon_{min+n})=\deg^+(\alpha_0 \upsilon_{min})>s'$, which contradicts that $\deg^+(\upsilon_j)\leq s'$ for every natural $j>0$. 

We now prove that if $\glorule$ is positively expansive then $A$ is expansive. 
Set
%\[
%\varphi_1= -\frac{\alpha_1}{\alpha_0}, \varphi_2= -\frac{\alpha_2}{\alpha_0}, \ldots, \varphi_{n-1}= -\frac{\alpha_{n-1}}{\alpha_0},\varphi_n= \frac{1}{\alpha_0}
%\]
\[
\varphi_n= -\frac{\alpha_1}{\alpha_0}, \varphi_{n-1}= -\frac{\alpha_2}{\alpha_0}, \ldots, \varphi_{2}= -\frac{\alpha_{n-1}}{\alpha_0},\varphi_1= \frac{1}{\alpha_0}\enspace.
\]
Assume that $A$ is not expansive, \ie, equivalently, $\coeff_0 +  \ldots +\coeff_{n-1} t^{n-1}+ t^n$ is not  expansive, and condition $(i)$ from Definition~\ref{exppoly} does not hold, \ie, either $\deg^+(\alpha_0)\leq 0$ %\textbf{e quindi? specialmente se $<0$}
 or $\deg^+(\alpha_0)\leq \max\{\deg^+(\alpha_1), \ldots, \deg^+(\alpha_{n-1})\}$ (the proof is symmetric if one supposes that  condition  $(ii)$ does not hold). In both cases we get that  $\max \{ \deg^+(\varphi_1),\ldots,\deg^+(\varphi_n)\}\geq 0$. 
 Let $\F_p$ be the field of fraction of $\LP_p$. 
%$\deg^+(\alpha_0)\leq \deg^+(\alpha_i)$ for some $i\in\{1, \ldots, n-1\}$.   
For any $\upsilon=(\upsilon_1,\ldots, \upsilon_n)\in \F_p^n$ define the sequence
%\[
%\{\upsilon_1, \ldots, \upsilon_n, \upsilon_{1+n}, \ldots,  \upsilon_{\ell+n}, \ldots \}\enspace,
%\]
\[
\{\upsilon_1, \ldots, \upsilon_n, \ldots,  \upsilon_{j}, \ldots \}\enspace,
\]
where, for each $j>n$, 
%\[
%\upsilon_{j}=-\frac{\alpha_1}{\alpha_0} \upsilon_{j-1}  
%-\frac{\alpha_{2}}{\alpha_0} \upsilon_{j-2} -\cdots- 
%\frac{\alpha_{n-1}}{\alpha_0} \upsilon_{j-n+1} + 
% \frac{1}{\alpha_0} \upsilon_{j-n}\in \F_p\enspace.
% \]
\[
\upsilon_{j}= \varphi_n \upsilon_{j-1}  +
\varphi_{n-1} \upsilon_{j-2} +\cdots + 
\varphi_2 \upsilon_{j-n+1} + 
\varphi_1 \upsilon_{j-n}\in \F_p\enspace.
\]
We emphasize that $A\,(\upsilon_j, \ldots, \upsilon_{j-n+1})=(\upsilon_{j-1}, \ldots, \upsilon_{j-n})$. The hypothesis of Lemma~\ref{lx} are satisfied and, hence, for every natural $j$ the integer interval $\{js+1, \ldots, (j+1)s\}$ contains a pick, where $s=n-min+1$ (with $min$ as in Lemma~\ref{lx}). We stress that $s$ does not depend on $(\upsilon_1, \ldots, \upsilon_n)$. For every natural $\ell>0$, we are now going to exhibit an integer $d^{(\ell)}$ and an element $\upsilon^{(\ell)}\in Left(\LP_p^n, d^{(\ell)})$ such that $A^{\ell'}\upsilon^{(\ell)}\in Left^*(\LP_p^n, d^{(\ell)})$ for each natural $\ell'\leq \ell$.  By Lemma~\ref{expansivity}, this is enough to state that $\glorule$ is not positively expansive and this concludes the proof.

So, to proceed, choose an arbitrary natural $\ell>0$ and let $h$ be such that $hs+1>\ell+n$. We know that $\{hs+1, \ldots, (h+1)s\}$ contains at least one pick whatever the first $n$ values $\upsilon_1, \ldots, \upsilon_n$ of the above sequence are (while the specific value of a pick inside that interval depends on the values of $\upsilon_1, \ldots, \upsilon_n$). Consider now   
\[
%\bm{\upsilon}
(\upsilon_1, \ldots, \upsilon_n)=\left(  \left(\coeff_0\right)^{h'},\ldots ,\left(\coeff_0\right)^{h'} \right)\enspace,
\]
%\[
%%\bm{\upsilon}
%(\upsilon_1, \ldots, \upsilon_n)=\left(  \left(\frac{\coeff_0}{X^d}\right)^{h'},\ldots ,\left(\frac{\coeff_0}{X^d}\right)^{h'} \right)\enspace,
%\]
where 
%$d=\deg^+(\alpha_0)$ and 
$h'=(h+1)s$. Regarding the above sequence when its first $n$ elements are just the components of such $(\upsilon_1, \ldots, \upsilon_n)$, let $J$ be the value of a pick inside $\{hs+1, \ldots, (h+1)s\}$. Set $\upsilon^{(\ell)}=(\upsilon_J, \ldots, \upsilon_{J-n+1})$ and let $d^{(\ell)}=\deg^+(\upsilon^{(\ell)})=\deg^+(\upsilon_J)$. At this point, we are able to state that  all the following facts hold:
\begin{enumerate}
\item[-] $\upsilon_j\in \LP_p$ for each natural $j$ with $0<j\leq h'$ and, hence, $A^{j}(\upsilon_{h'}, \ldots, \upsilon_{h'-n+1})\in\LP_p^n$ for each natural $j$ with $0\leq j \leq h'-n$;
\item[-] in particular, $\upsilon_J\in \LP_p$ and $A^{\ell'} \upsilon^{(\ell)} \in\LP_p^n$ for each natural $\ell'\leq \ell$ (since $\ell+n< hs+1\leq J\leq h'$);
%since $J\geq hs+1> \ell+n$; %with $0\leq \ell' \leq \ell$;
\item[-] moreover, $\upsilon^{(\ell)}\in Left(\LP_p^n, d^{(\ell)})$;
\item[-] finally, $A^{\ell'} \upsilon^{(\ell)} \in Left^*(\LP_p^n, d^{(\ell)})$ for each natural $\ell'\leq \ell$ %with $0\leq \ell' \leq \ell$ 
(since $J$ is a pick);
\end{enumerate}
In this way an integer $d^{(\ell)}$ and an element $\upsilon^{(\ell)}$ with the desired property have been exhibited. 
\end{proof}
\begin{lemma}
\label{transpose}
Let $A\in\LP_p^{n\times n}$ be the matrix such that its transpose $A^T$ is the companion matrix of any monic polynomial % $\coeff_0 +  \ldots +\coeff_{n-1} t^{n-1}+ t^n$ \\ 
$-\coeff_0 +  \ldots - \coeff_{n-1} t^{n-1}+ t^n$
from $\LP_p[t]$, where $p$ and $n$ are any two naturals such that $p$ is prime and $n>1$, and let $\glorule$ be the LCA over $(\Z/p\Z)^n$ having $A$ as associated matrix. 
The LCA $\glorule$ is positively expansive if and only if the LCA $\mathcal{G}$ over $(\Z/p\Z)^n$ having $A^T$ as associated matrix is positively expansive. 
\end{lemma}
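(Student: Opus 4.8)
The plan is to deduce Lemma~\ref{transpose} from Lemma~\ref{main2} by exhibiting an invertible matrix over $\LP_p$ that intertwines $A$ and $A^{T}$; once this is done, the equivalence of positive expansivity for $\glorule$ and $\mathcal{G}$ is immediate, and, combined with Lemma~\ref{exr}, it shows in passing that $\mathcal{G}$ is positively expansive exactly when $A^{T}$ (equivalently $A$, equivalently $\coeff_0+\cdots+\coeff_{n-1}t^{n-1}+t^{n}$) is expansive.

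First I would pass to the field of fractions $\F_p$ of $\LP_p$ and use the classical fact that over a field every square matrix is similar to its transpose. Concretely, $tI-A^{T}=(tI-A)^{T}$, and transposition leaves every determinantal divisor $\Delta_i$ unchanged, since the $i$-minors of a matrix and of its transpose take the same values; hence $tI-A$ and $tI-A^{T}$ have the same invariant factors over $\F_p[t]$ and are therefore equivalent, so there is an invertible $P\in\F_p^{n\times n}$ with $PA=A^{T}P$. Alternatively, when $\coeff_0\neq 0$ one can shortcut this via Lemma~\ref{main1}: $A$ is similar over $\F_p$ to the companion matrix $A^{T}$, hence non-derogatory, so the rational canonical form $C$ furnished by Lemma~\ref{main1} is a single companion block with $\chi_C=\chi_A$, \ie $C=A^{T}$, and Lemma~\ref{main1} already yields $Q\in\LP_p^{n\times n}$ with $\det(Q)\neq 0$ and $QA=A^{T}Q$.

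Next I would clear denominators. Choosing $\beta\in\LP_p\setminus\{0\}$ that is a common multiple of the denominators of all entries of $P$ and setting $Q=\beta P\in\LP_p^{n\times n}$, one gets $QA=\beta PA=\beta A^{T}P=A^{T}Q$, while $\det(Q)=\beta^{n}\det(P)\neq 0$ because $\LP_p$ is an integral domain and $P$ is invertible over $\F_p$. Then I would apply Lemma~\ref{main2} with its three matrices instantiated as $A^{T}$, $A$, and $Q$ (which satisfy $A^{T}Q=QA$ with $\det(Q)\neq 0$): it gives that the LCA with associated matrix $A^{T}$, namely $\mathcal{G}$, is positively expansive if and only if the LCA with associated matrix $A$, namely $\glorule$, is positively expansive, which is precisely the claim. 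No case distinction is needed, because the construction of $Q$ never uses $\det(A)\neq 0$ and Lemma~\ref{main2} already covers the degenerate case $\det(A)=\det(A^{T})=0$ (that is, $\coeff_0=0$), where both LCA are non-surjective and hence trivially both non-positively-expansive. There is essentially no obstacle here: the lemma is a corollary of the machinery of Lemmas~\ref{main1} and~\ref{main2}, and the only point requiring care is the descent of the conjugating matrix from $\F_p$ down to $\LP_p$ while keeping its determinant nonzero, which the normalization $Q=\beta P$ achieves.
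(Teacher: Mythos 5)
Your proof is correct, and its skeleton is the same as the paper's: produce a matrix $Q\in\LP_p^{n\times n}$ with $\det(Q)\neq 0$ and $A^{T}Q=QA$, then invoke Lemma~\ref{main2}. The only real difference is how $Q$ is obtained. The paper simply writes one down explicitly, namely the Hankel-type matrix
\[
Q=
\begin{pmatrix}
-\coeff_{1} &-\coeff_{2}&\cdots&-\coeff_{n-1}&1\\
-\coeff_{2}&-\coeff_{3}&\iddots&1&0\\
\vdots&\iddots&\iddots&\iddots&\vdots\\
-\coeff_{n-1}&1&\iddots&0&0\\
1&0&\cdots&0&0
\end{pmatrix},
\]
which is anti-triangular with $1$'s on the anti-diagonal, so $\det(Q)=\pm 1$ is visible at a glance and no field of fractions is needed. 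You instead derive the existence of $Q$ abstractly: over $\F_p$ the matrices $tI-A$ and $(tI-A)^{T}$ have the same determinantal divisors, hence $A$ is similar to $A^{T}$, and clearing denominators descends the conjugator to $\LP_p$ while preserving nonvanishing of the determinant. Both arguments are complete; yours is less computational and generalizes to any pair of matrices similar over $\F_p$ (indeed it is essentially the same denominator-clearing device the paper uses inside Lemma~\ref{main1}), while the paper's is self-contained and constructive. Your remark that no case distinction on $\coeff_0$ is needed is also accurate, since neither the construction of $Q$ nor Lemma~\ref{main2} requires $\det(A)\neq 0$; just be aware that your ``shortcut'' via Lemma~\ref{main1} is only available when $\coeff_0\neq 0$, as you correctly flag, so the main route through similarity-to-transpose is the one that actually carries the full statement.
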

\begin{proof}
\begin{comment}
It is clear that 
$$A^T=
\begin{pmatrix}
0&0&0&\coeff_0\\
1&\cdots&0&\coeff_1\\
\vdots&\ddots&\vdots&\vdots\\
0&\cdots&1&\coeff_{n-1}
\end{pmatrix}\enspace.
$$
$$A=
\begin{pmatrix}
0&1&0&\cdots&0\\
0&0&1&\ddots&0\\
\vdots&\ddots&\ddots&\ddots&0\\
0&0&\cdots&0&1\\
\coeff_0&\coeff_1&\cdots&\coeff_{n-2}&\coeff_{n-1}
\end{pmatrix}\enspace.
$$
\end{comment}
Clearly, $A$ is as in~\eqref{riga}. It holds that $A^TQ=QA$, where
\[Q=
\begin{pmatrix}
-\coeff_{1} &-\coeff_{2}&\cdots&-\coeff_{n-1}&1\smallskip \\ 
-\coeff_{2}&-\coeff_{3}&\iddots&1&0\\ 
\vdots&\iddots&\iddots&\iddots&\vdots \smallskip\\ 
-\coeff_{n-1}&1&\iddots&0&0 \smallskip \\ 
1&0&\cdots&0&0
\end{pmatrix}\in \LP_p^{n\times n}\enspace.
\]
Since $\det(Q)\neq 0$, the thesis directly follows from Lemma~\ref{main2}.
\end{proof}
%%%%
%%%%
We now prove that the decidable characterization of positive expansivity provided by Lemma~\ref{exr} also holds also in a more general situation, namely, for LCA over $(\Z/p\Z)^n$ with associated matrix that is in a rational canonical form possibly consisting of more than one block.
\begin{lemma}
\label{riscan}
Let $C\in\LP_p^{n\times n}$ be any matrix in rational canonical form, where $p$ and $n$ are any two naturals such that $p$ is prime and $n>1$, and let $\mathcal{G}$ be the LCA over $(\Z/p\Z)^n$ having $C$ as associated matrix. 
The LCA $\mathcal{G}$ is positively expansive if and only if $C$ is expansive. 
%Let $C\in\LP_p^{n\times n}$ be any matrix such that $C^T$ is in rational canonical form, where $p$ and $n$ are any two naturals such that $p$ is prime and $n>1$, and let $\mathcal{G}$ be the LCA over $(\Z/p\Z)^n$ having $C$ as associated matrix. 
%The LCA $\mathcal{G}$ is positively expansive if and only if $C$ is expansive. 
\end{lemma}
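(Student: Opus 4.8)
The plan is to reduce the statement about a matrix $C$ in rational canonical form, which is block diagonal with companion blocks $C_{\pi_1},\ldots,C_{\pi_s}$, to the single-block case already settled in Lemma~\ref{exr} (via Lemma~\ref{transpose}), and then to glue the blocks together using the multiplicative behaviour of expansiveness established in Lemma~\ref{pura}. The key observation is that the LCA with associated matrix $C$ is, up to a permutation of coordinates that does not affect positive expansivity, the product of the $s$ LCA with associated matrices $C_{\pi_1},\ldots,C_{\pi_s}$; positive expansivity of a finite product of CA over finite alphabets is equivalent to positive expansivity of every factor (this is immediate from the definition, since the Tychonoff distance on the product is comparable to the max of the distances on the factors, and the dynamics act coordinatewise).

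First I would record that $\chi_C=\prod_{i=1}^s\chi_{C_{\pi_i}}=\prod_{i=1}^s\pi_i$ up to sign, and that each $C_{\pi_i}$ is the \emph{transpose} of a matrix of the form~\eqref{riga}; hence by Lemma~\ref{transpose} the LCA with matrix $C_{\pi_i}$ is positively expansive if and only if the LCA with matrix $C_{\pi_i}^T$ is, and by Lemma~\ref{exr} the latter holds if and only if $\pi_i$ is expansive (equivalently $C_{\pi_i}$ is expansive). So $\mathcal{G}$ restricted to the $i$-th block is positively expansive iff $\pi_i$ is expansive. Next, by the product remark above, $\mathcal{G}$ is positively expansive iff every block LCA is, i.e. iff $\pi_i$ is expansive for all $i$. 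Finally, by a repeated application of Lemma~\ref{pura} to the factorisation $\chi_C=\pm\prod_{i=1}^s\pi_i$, the characteristic polynomial $\chi_C$ is expansive iff each $\pi_i$ is expansive; and since $C$ is expansive by definition precisely when $\chi_C$ is expansive, we conclude that $\mathcal{G}$ is positively expansive iff $C$ is expansive.

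There are two small points that need care rather than a genuine obstacle. The first is that Lemma~\ref{exr} and Lemma~\ref{transpose} are stated for $n>1$, whereas an individual block $C_{\pi_i}$ may be $1\times 1$ (when $\deg\pi_i=1$); in that degenerate case $\pi_i(t)=t-\alpha$ and one checks directly that the scalar LCA $x\mapsto \alpha\cdot x$ on $(\Z/p\Z)^1$ is positively expansive iff $\deg^+(\alpha)>0$ and $\deg^-(\alpha)<0$, which is exactly the condition that $\pi_i=t-\alpha$ be expansive (the conditions on the non-leading coefficients are vacuous), so the claimed equivalence still holds; alternatively one folds such blocks into a neighbouring block of the product. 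The second point is to make the "product of CA" reduction precise: one must check that the coordinate permutation bringing the direct sum of the block actions into the actual action of $C$ on $((\Z/p\Z)^n)^\Z$ is a topological conjugacy, which is clear since it is induced by a bijection of the finite alphabet. I expect the genuinely substantive content to have been already discharged in Lemmas~\ref{pura}, \ref{exr} and \ref{transpose}, so the main (mild) obstacle here is simply bookkeeping: correctly matching the block structure of $C$ with the factorisation $\chi_C=\prod_i\pi_i$ and invoking Lemma~\ref{pura} inductively.
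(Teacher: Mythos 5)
Your proposal is correct and follows essentially the same route as the paper: decompose $C$ into its companion blocks, observe that $\mathcal{G}$ is the product of the block LCA so that positive expansivity holds iff it holds for every factor, reduce each block to Lemma~\ref{exr} via Lemma~\ref{transpose}, and glue the blocks back together with Lemma~\ref{pura} applied to $\chi_C=\prod_i\pi_i$. Your explicit treatment of the degenerate $1\times 1$ blocks (which fall outside the stated hypotheses of Lemmas~\ref{exr} and~\ref{transpose}) is a point the paper's proof passes over in silence, so that extra care is welcome rather than a deviation.
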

\begin{proof}
Let $C_{\pi_1}\in\LP_p^{n_1\times n_1}, \ldots, C_{\pi_s}\in\LP_p^{n_s\times n_s}$ with $n_1+\ldots +n_s=n$ be the diagonal blocks inside $C$, where each $\pi_i$ is the monic polynomial defining $C_{\pi_i}$, \ie, $C_{\pi_i}$ is the companion matrix of $\pi_i$.  Since $\chi_C(t)=\Pi_{i=1}^s \pi_i(t)$ and $\mathcal{G}$ is just the product  $\mathcal{G}_1 \times \ldots \times \mathcal{G}_s$, where each $\mathcal{G}_i$ is the LCA over $(\Z/p\Z)^{n_i}$ having $C_{\pi_i}\in\LP_p^{n_i\times n_i}$ as associated matrix, it follows that $\mathcal{G}$ is positively expansive if and only if every $\mathcal{G}_i$ is positively expansive if and only if, by Lemma~\ref{exr} and~\ref{transpose}, every $C_{\pi_i}$ is expansive, \ie, by Definition~\ref{exppoly}, if and only if every $\pi_i$ is expansive, \ie, by Lemma~\ref{pura}, if and only if $\chi_C$ is expansive, \ie, if and only if $C$ is expansive.
 %
%Let $C^T_{\pi_1}\in\LP_p^{n_1\times n_1}, \ldots, C^T_{\pi_s}\in\LP_p^{n_s\times n_s}$ with $n_1+\ldots +n_s=n$ be the diagonal blocks inside $C^T$, where each $\pi_i$ is the monic polynomial defining $C^T_{\pi_i}$, \ie, $C^T_{\pi_i}$ is the companion matrix of $\pi_i$.  Since $\chi_C(t)=\chi_{C^T}(t)=\prod_{i=1}^s \pi_i(t)$ and $\mathcal{G}$ is just the product  $\mathcal{G}_1 \times \ldots \times \mathcal{G}_s$, where each $\mathcal{G}_i$ is the LCA over $(\Z/p\Z)^{n_i}$ having $C_{\pi_i}\in\LP_p^{n_i\times n_i}$ as associated matrix, it follows that $\mathcal{G}$ is positively expansive if and only if every $\mathcal{G}_i$ is positively expansive if and only if, by Lemma~\ref{exr}, every $C_{\pi_i}$ is expansive, \ie, by Definition~\ref{exppoly}, if and only if every $\pi_i$ is expansive, \ie, by Lemma~\ref{pura}, if and only if $\chi_C$ is expansive, \ie, if and only if $C$ is expansive.
\end{proof}

We are now able to prove Theorem~\ref{risp}.
\begin{proof}[Proof of Theorem~\ref{risp}]
Let $A\in\LP_p^{n\times n}$ be the matrix associated with $\glorule$. By Lemma~\ref{main1}, there exist two matrices $Q, C\in\LP_p^{n\times n}$ such that $\det(Q)\neq 0$, $C$ is in rational canonical form, $QA=CQ$, and $\chi_A=\chi_C$. Let $\mathcal{G}$ be the LCA over $(\Z/p\Z)^n$ having $C$ as associated matrix. By Lemma~\ref{main2}, $\glorule$ is positively expansive if and only if $\mathcal{G}$ is positively expansive, \ie, by Lemma~\ref{riscan}, if and only if $C$ is expansive, \ie, since $\chi_A=\chi_C$, if and only if $A$ is expansive.  
\end{proof}
\end{document}